\definecolor{prxlink}{RGB}{46,48,146}
\theoremstyle{plain}
\newtheorem{theorem}{Theorem}[section]
\newtheorem{proposition}[theorem]{Proposition}
\newtheorem{lemma}[theorem]{Lemma}
\theoremstyle{definition}
\newtheorem{assumption}[theorem]{Assumption}
\theoremstyle{remark}
\newcommand{\ci}{\ensuremath{\mathrm{i}}}
\newcommand{\cpi}{\ensuremath{\uppi}}
\newcommand{\change}[1]{{#1}}
\begin{document}
\title{Arbitrary Polynomial Separations in Trainable Quantum Machine Learning}
\author{Eric R.\ Anschuetz}
\email{eans@caltech.edu}
\affiliation{Institute for Quantum Information and Matter, Caltech, Pasadena, CA, USA}
\affiliation{Walter Burke Institute for Theoretical Physics, Caltech, Pasadena, CA, USA}
\affiliation{MIT Center for Theoretical Physics, Cambridge, MA, USA}
\author{Xun Gao}
\affiliation{JILA and Department of Physics, CU Boulder, Boulder, CO, USA}

\begin{abstract}
    Recent theoretical results in quantum machine learning have demonstrated a general trade-off between the expressive power of quantum neural networks (QNNs) and their trainability; as a corollary of these results, practical exponential separations in expressive power over classical machine learning models are believed to be infeasible as such QNNs take a time to train that is exponential in the model size. We here circumvent these negative results by constructing a hierarchy of efficiently trainable QNNs that exhibit unconditionally provable, polynomial memory separations of arbitrary constant degree over classical neural networks---including state-of-the-art models, such as Transformers---in performing a classical sequence modeling task. This construction is also computationally efficient, as each unit cell of the introduced class of QNNs only has constant gate complexity. We show that contextuality---informally, a quantitative notion of semantic ambiguity---is the source of the expressivity separation, suggesting that other learning tasks with this property may be a natural setting for the use of quantum learning algorithms.
\end{abstract}

\maketitle

\section{Introduction}

The field of quantum computing has made remarkable strides in recent years, with many quantum experiments demonstrating the potential for quantum devices to efficiently solve classically intractable problems~\cite{arute2019quantum,doi:10.1126/science.abe8770,PhysRevLett.127.180501,PhysRevLett.127.180502,ZHU2022240,hangleiter2022}. These experiments have largely involved performing sampling tasks due to the fact that quantum devices naturally sample from distributions that are generally regarded as difficult for classical computers to sample from~\cite{10.1145/1993636.1993682,bouland2019complexity,v016a011}. This observation has motivated recent study~\cite{schuld2015introduction,biamonte2017quantum,PhysRevX.8.021050,PhysRevLett.121.040502,PhysRevLett.122.040504,PhysRevResearch.1.033063} into the feasibility of using quantum devices to perform a class of machine learning tasks known as \emph{generative modeling}: given samples from some target distribution $p\left(\bm{y}\right)$, can one efficiently learn and then sample from some model distribution $q\left(\bm{y}\right)$ closely approximating $p\left(\bm{y}\right)$ using a quantum algorithm? Furthermore, to determine whether the overhead of constructing a quantum device to implement such a quantum algorithm is necessary, one is also interested in the question of \emph{quantum advantage}: are there any quantum-efficient learning tasks which \emph{require} quantum resources to efficiently perform?

Recent theoretical analysis of quantum machine learning (QML) algorithms has determined that, indeed, such learning tasks do exist~\cite{doi:10.1126/sciadv.aat9004,coyle2020born,PhysRevResearch.2.033125,sweke2021quantum,liu2021rigorous,gilfuster2024relation}. These results rely on reducing machine learning tasks to classically hard computational problems---such as the discrete logarithm---which are known to be easy to solve using quantum devices~\cite{doi:10.1137/S0097539795293172}. The associated learning algorithms are then problem-specific, i.e., require prior knowledge of the classically-hard problem used in the reduction. For this reason, while these constructions are theoretically useful for assessing the promise of quantum learning algorithms, it is unclear how to train such QML algorithms in real-world learning settings.

One might hope that these problem-specific learning algorithms could be replaced by a problem-agnostic algorithm such as gradient descent. This is possible in classical machine learning, where it is known that gradient descent efficiently optimizes typical loss functions even if in the worst case the loss can be nonconvex~\cite{pmlr-v38-choromanska15,chaudhari2017energy}. Unfortunately, it is known that this is not the case in the quantum setting; generally, the training of QML models using problem-agnostic algorithms is difficult due to \emph{quantum trainability barriers} (QTBs) such as the presence of ``barren plateaus'' (exponential loss function concentration)~\cite{mcclean2018barren,cerezo2021cost,napp2022quantifying,cerezo2020impact,arrasmith2021effect} or poor local minima in the loss landscape~\cite{anschuetz2021critical,anschuetz2022barren,anschuetz2024unified}. More recently, it has been shown that QTBs can only be avoided when the dynamics of the network are constrained to belong to a low-dimensional subspace of the full Hilbert space~\cite{fontana2023adjoint,ragone2023unified,anschuetz2024unified}. In many such scenarios efficient classical simulation algorithms exist for the quantum model, putting into question the practical utility of QML models~\cite{anschuetz2022efficient,Goh_2025,cerezo2023does}. Though there are still rigorous quantum-classical separations remaining in QML even with these restrictions~\cite{gao2021enhancing,zhao2024entanglement,anschuetzgao2022}, the separations are modest, making a practical experimental implementation showcasing an advantage unlikely due to the large overhead of quantum error correction~\cite{PRXQuantum.2.010103}.

We here for the first time balance considerations of (problem-agnostic) trainability, efficiency, and large quantum advantage by constructing a hierarchy of quantum neural networks (QNNs) called \emph{$k$-hypergraph recurrent neural networks} ($k$-HRNNs) satisfying three simultaneous properties:
\begin{enumerate}
    \item \textbf{Trainability.} $k$-HRNNs avoid QTBs.
    \item \textbf{Efficiency.} Each unit cell of a $k$-HRNN has quantum gate complexity independent of the model size.
    \item \textbf{Large quantum advantage.} $k$-HRNNs exhibit arbitrary polynomial memory advantages over \emph{any} classical neural network in performing a certain generative modeling task over sequences.
\end{enumerate}
Namely, we show that for every $k$-HRNN of model size $n$ there exists a certain sequence learning task that can be performed by the model to zero error. We then show that no classical network with fewer than $\binom{n}{k}-1$ neurons can perform this task to \emph{any} finite cross entropy. Importantly, the classical networks for which we prove a memory lower bound include state-of-the-art models such as Transformers~\cite{10.5555/3295222.3295349}. $k$-HRNNs achieve this balance of efficient trainability and high expressiveness by having a considerable amount of structure: the latent state of the network is constrained to be a \emph{quantum hypergraph state}~\cite{takeuchi2019resource,Webster2022xpstabiliser}. We show that this structure is sufficient to prevent QTBs from occurring in the associated loss landscape of the network, while still giving the network enough freedom to maintain a large quantum-classical expressivity separation.

\change{Our result is asymptotically optimal, as there exists a classical algorithm capable of simulating $k$-HRNNs using memory $\operatorname{\Theta}\left(\binom{n}{k}\right)$~\cite{PhysRevLett.97.190501}. We emphasize that while an arbitrary-polynomial time separation persists during inference, this classical simulation algorithm only exhibits a \emph{cubic} time overhead during training due to the relative inefficiency of training on a quantum computer. Similar classical simulation methods were previously thought to inhibit any large quantum-classical separation in QML~\cite{anschuetz2022efficient,Goh_2025,cerezo2023does}. Our work manifests one important detail of these results: while this conventional wisdom is true during \emph{training} (at least when the algebraic structure is known a priori~\cite{mints2025fragmentationefficientlylearnablequantum}), large separations can be maintained during \emph{inference}. As inference is by far the most expensive cost of deploying networks today~\cite{MLSYS2022_462211f6,mavromatis2024computinglimitsempiricalstudy}, our results are an important example of how large quantum-classical separations in machine learning can exist even though these efficient classical simulation algorithms are known.}

Our separation is also interpretable: we show that the classical memory lower bound follows from the large amount of \emph{contextuality}---in a technical sense~\cite{kochen1975problem,Abramsky_2011}---in the learning task we consider. Informally, the semantic meaning of any given token strongly depends on other tokens in the sequence. We prove that the presence of this contextuality inhibits efficient classical representations of the data. Though our result is proven on a specific data set, this suggests sequential data with a similar correlation structure may exhibit similar classical memory lower bounds. Indeed, contextuality has previously been used as a rigorous model for linguistic ambiguity~\cite{abramsky2014}, relational databases~\cite{doi:10.1142/9789814730617_0002}, and other settings where data has a ``global inconsistency'' structure~\cite{abramsky_et_al,e22090981}. One interpretation of our results is then that this correlation structure inhibits efficient classical representations of data but not necessarily efficient quantum representations of data, motivating problems exhibiting semantic ambiguity as natural settings in which to use QML algorithms.

Our results also have implications beyond machine learning; our separation can also be viewed as a quantum-classical communication separation in sampling from a certain conditional distribution $p\left(y_1,\ldots,y_\ell\mid x_1,\ldots,x_\ell\right)$, where $\ell$ parties each receive one $x_i$ and output the associated $y_i$. Viewed through this lens, our results differ from previously known quantum-classical communication complexity separations~\cite{10.1145/301250.301343,PhysRevLett.87.167902,10.1145/1148109.1148119,10.1145/1250790.1250866,PhysRevLett.130.080802,kallaugher2023exponential,manna2024unbounded} not only in the strength of the error model we consider, but also as the quantum algorithm capable of performing this task is efficient---each party only requires constant gate complexity. While there are many parties---the sequence lengths are on the order of $\binom{n}{k}$ in our learning task---this is natural in a streaming setting, where one hopes to minimize the per-token resource requirements in processing a long stream of tokens.

We proceed in the following way. In Sec.~\ref{sec:background}, we review basic concepts in quantum computation and neural sequence learning. Then, in Sec.~\ref{sec:hrnn}, we introduce a new class of QNN models and demonstrate that it avoids quantum barriers to trainability. We then introduce a sequence learning task in Sec.~\ref{sec:main_result_main_text}, and give large classical communication-complexity lower-bounds in performing this task; simultaneously, we show that the QNN model we introduce is able to efficiently perform this task. Finally, we end in Sec.~\ref{sec:conc} with discussion on the implications of our results.

\section{Preliminaries}\label{sec:background}

\subsection{Quantum Generative Models}\label{sec:quant_gen_mods}

We first review quantum generative models. Given a quantum state $\ket{\psi}$ on either an $n$-qubit or $n$-qumode system, there is a natural induced probability distribution given by the \emph{measurement distribution}:
\begin{equation}
    q\left(y_1,\ldots,y_\ell\right)=\bra{\psi}\varPi_{y_1}\ldots\varPi_{y_\ell}\ket{\psi},
\end{equation}
where $\varPi_{y_i}$ are Hermitian projectors (assumed here to be diagonal). More generally, one can consider a measurement distribution parameterized by $\bm{\theta}\in\mathbb{R}^p$:
\begin{equation}
    q_{\bm{\theta}}\left(y_1,\ldots,y_\ell\right)=\bra{\psi}U_{\bm{\theta}}^\dagger\varPi_{y_1}\ldots\varPi_{y_\ell} U_{\bm{\theta}}\ket{\psi},
\end{equation}
where $U_{\bm{\theta}}$ is a parameterized unitary operator. When $\bm{\theta}$ is optimized such that $q_{\bm{\theta}}$ approximates some distribution of interest $p$, we say that $q_{\bm{\theta}}\left(y_1,\ldots,y_\ell\right)$ is a \emph{quantum generative model}. On a quantum computer this optimization is typically performed by optimizing the maximum mean discrepency (MMD) loss function, which can be written as minimizing the expectation value of some diagonal Hermitian operator $O_{\mathcal{D}}$ depending only on the training set $\mathcal{D}$~\cite{rudolph2023trainability}:
\begin{equation}
    \ell\left(\bm{\theta}\right)=\bra{\psi}U_{\bm{\theta}}^\dagger O_{\mathcal{D}}U_{\bm{\theta}}\ket{\psi}.
\end{equation}

If the class $U_{\bm{\theta}}$ is chosen generically, this optimization is provably hard due to the presence of \emph{quantum trainability barriers} (QTB)~\cite{fontana2023adjoint,ragone2023unified,anschuetz2024unified}. Such barriers include the exponential concentration of typical loss functions over uniformly sampled $\bm{\theta}$ (commonly known as the presence of ``barren plateaus''), or a loss landscape dominated by poor local minima. QTBs are provably avoided when~\cite{rudolph2023trainability,fontana2023adjoint,ragone2023unified,anschuetz2024unified}:
\begin{enumerate}
    \item $U_{\bm{\theta}}$ is constrained to belong to a $\operatorname{poly}\left(n\right)$-dimensional Lie subgroup $G$ of the unitary group, and
    \item both $\ci\ket{\psi}\bra{\psi}$ and $\ci O_{\mathcal{D}}$ belong to the algebra generating $G$.
\end{enumerate}
Indeed, one can show that training such models via gradient descent takes time at least $\operatorname{\Omega}\left(\dim\left(G\right)\right)$~\cite{fontana2023adjoint,ragone2023unified}.

Finally, we introduce some notation. We will often reference the $X$ and $Z$ Pauli operators:
\begin{equation}
    X=\begin{pmatrix}
    0 & 1\\
    1 &0
    \end{pmatrix},\quad Z=\begin{pmatrix}
    1 & 0\\
    0 & -1
    \end{pmatrix},
\end{equation}
with subscript $P_i$ shorthand for the $n$-fold tensor product $I_2\otimes\ldots\otimes P\otimes\ldots\otimes I_2$ for Pauli operator $P$ at the $i$th position. We will also consider the canonical position and momentum operators $\hat{q}_i,\hat{p}_i$ formally acting on an infinite-dimensional Hilbert space and obeying the so-called canonical commutation relation (in units where $\hbar=1/2$):
\begin{equation}
    \left[\hat{q}_j,\hat{p}_k\right]=\frac{\ci}{2}\delta_{jk}.
\end{equation}

\subsection{Neural Sequence Learning}\label{sec:classical_sequence_learning}

\begin{figure}
    \begin{center}
        \includegraphics[width=0.6\linewidth]{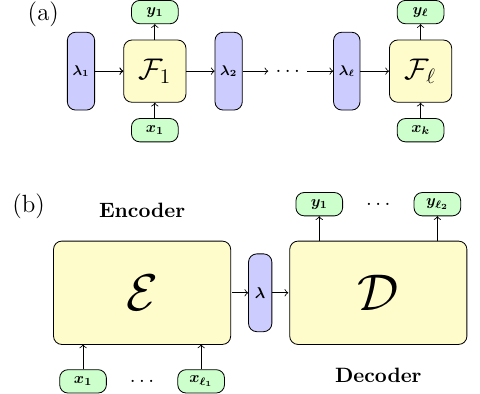}
        \caption{(a) An autoregressive neural sequence model. The model autoregressively takes input tokens $\bm{x_i}$, and outputs decoded tokens $\bm{y_i}$, with map $\mathcal{F}_i$. The model also has an unobserved internal memory with state $\bm{\lambda_i}\in L$ after decoding the token $\bm{x_{i-1}}$ that $\mathcal{F}_i$ can depend on. (b) A general encoder-decoder model. $\mathcal{E}$ encodes the input $\bm{x}$ into some latent representation $\bm{\lambda}\in L$. A decoder $\mathcal{D}$ then outputs the decoded sequence $\bm{y}$.\label{fig:classical_models}}
    \end{center}
\end{figure}

\emph{Sequence-to-sequence} or \emph{sequence} learning~\cite{10.5555/2969033.2969173} is the approximation of some given conditional distribution $p\left(\bm{y}\mid\bm{x}\right)$ with a model distribution $q\left(\bm{y}\mid\bm{x}\right)$, where the vectors $\bm{x},\bm{y}$ are sequences of vectors $\left(\bm{x_1},\ldots,\bm{x_{\ell_1}}\right),\left(\bm{y_1},\ldots,\bm{y_{\ell_2}}\right)$; we call each vector ($\bm{x_i}$ or $\bm{y_i}$) a \emph{token}. This framework encompasses sentence translation tasks~\cite{10.5555/2969033.2969173}, speech recognition~\cite{Prabhavalkar2017}, image captioning~\cite{Vinyals_2015_CVPR}, and many more practical problems.

Sequence modeling today is typically performed using \emph{neural sequence models}. Generally, these models are functions parameterized by some $\bm{\theta}\in\mathbb{R}^p$ that take as input the sequence $\bm{x}$ and output a sample from the parameterized model distribution $q_{\bm{\theta}}\left(\bm{y}\mid\bm{x}\right)$. The parameters of these functions are trained to minimize an appropriate loss function, such as the (forward) empirical cross entropy:
\begin{equation}
        \hat{H}\left(p,q\mid\bm{\theta}\right)=-\frac{1}{\left\lvert\mathcal{T}\right\rvert}\sum\limits_{\left(\bm{x},\bm{y}\right)\in\mathcal{T}}p\left(\bm{y}\mid\bm{x}\right)\log\left(q_{\bm{\theta}}\left(\bm{y}\mid\bm{x}\right)\right),
        \label{eq:emp_cross_ent_main_text}
\end{equation}
where $\mathcal{T}=\left\{\left(\bm{x_i},\bm{y_i}\right)\right\}$ are samples from $p\left(\bm{x},\bm{y}\right)$. The backward empirical cross entropy is similarly defined, with $p\leftrightarrow q_{\bm{\theta}}$.

One of the unique challenges when performing sequence modeling is the potentially large data dimensions due to data being comprised of arbitrarily long sequences. Thus, to maintain a resource scaling independent of the input sequence length, neural sequence models usually fall into one of two classes: \emph{autoregressive sequence models}~\cite{Hopfield2554,10.1162/neco.1997.9.8.1735,cho-etal-2014-learning} or \emph{encoder-decoder models}~\cite{10.5555/2969033.2969173,10.5555/3295222.3295349}. These classes of models are illustrated in Figs.~\ref{fig:classical_models}(a) and~\ref{fig:classical_models}(b), respectively; in the former, the trained network components are the unit cells $\mathcal{F}_i$, while in the latter the network is composed of an encoder $\mathcal{E}$ and a decoder $\mathcal{D}$. Both models make use of a \emph{latent vector} $\bm{\lambda}$ belonging to a \emph{latent space} $L$ in order to model long-range correlations between the tokens in a given data set. While the limit of large $\dim\left(\bm{\lambda}\right)$ yields more expressive models, it also (typically) yields models less efficient to sample from. For this reason, we will use the dimension of $\bm{\lambda}$ as the metric by which we evaluate the efficiency of a given model.

We have not yet stated any requirements on the model functions $\mathcal{F}_i$, $\mathcal{E}$, or $\mathcal{D}$. If these functions were allowed to be fully general, they could in principle implement arbitrarily complicated, discontinuous functions. In practice, however, these $\mathcal{F}_i$, $\mathcal{E}$, and $\mathcal{D}$ are composed of elementary functions with some sort of smoothness structure such that efficient training of the network is possible. For this reason, we assume classical networks satisfy two properties on some finite-volume subspace $\mathcal{X}$ of the space of all inputs:
\begin{assumption}[$C^2$ on $\mathcal{X}$]
    In the autoregressive setting, each $\mathcal{F}_i$ is continuously second-differentiable on $\mathcal{X}$. In the encoder-decoder setting, $\mathcal{E}$ is continuously second-differentiable on $\mathcal{X}$.\label{ass:c2}
\end{assumption}
\begin{assumption}[Strongly Morse on $\mathcal{X}$]
    In the autoregressive setting, the Jacobian of each $\mathcal{F}_i$ has determinant lower-bounded by a constant on $\mathcal{X}$. In the encoder-decoder setting, the Jacobian of $\mathcal{E}$ has determinant lower-bounded by a constant on $\mathcal{X}$.\label{ass:strong_sub}
\end{assumption}

Assumption~\ref{ass:c2} is straightforward and is automatically satisfied by models composed of linear layers along with rectified linear units, softmax functions, sigmoid functions, hyperbolic tangent functions, and other standard nonlinearities. Assumption~\ref{ass:strong_sub} is slightly more subtle. Stated another way, we require that there is a finite-volume subspace of the space of inputs that are not near-critical points of the network. This is effectively a requirement that the model is \emph{strongly Morse}~\cite{pmlr-v89-mokhtari19a} in a multivariate sense. Neural networks being strongly Morse is a common assumption~\cite{64871590-990e-34a9-8236-2954b5e72da7,pmlr-v89-mokhtari19a,yang2021,dixit2023accelerated}, and certain classes of neural networks are known to be almost surely Morse over settings of their parameters~\cite{kurochkin2021neural}. In Appendix~\ref{sec:proof_of_express_sep} we give other natural settings where these Assumptions are automatically satisfied.

\section{Quantum \texorpdfstring{$k$}{k}-Hypergraph Recurrent Neural Networks}\label{sec:hrnn}

We now introduce the hierarchy of quantum models that exhibit arbitrary polynomial expressivity separations over classical neural networks satisfying Assumptions~\ref{ass:c2} and~\ref{ass:strong_sub}. We call the $k$th level of this hierarchy of models \emph{$k$-hypergraph recurrent neural networks} ($k$-HRNNs) for reasons that will soon become clear. We give a construction of these models in both a continuous variable (CV) setting and a qubit setting. In Appendix~\ref{sec:hrnns_as_q_rnns}, we also show that our $k$-HRNN construction has a natural interpretation as the quantization of a classical autoregressive architecture.

\subsection{Qumode Construction}\label{sec:qumode_hrnn_inf}

In a similar fashion to classical recurrent neural networks~\cite{Hopfield2554}, $k$-HRNNs are autoregressive models---they are given input tokens $\bm{x_i}$ sequentially, and output tokens $\bm{y_i}$ sequentially. The model also has an $n$-qumode \emph{latent state} $\ket{\lambda}$, generalizing the $n$-dimensional latent vector $\bm{\lambda}\in\mathbb{R}^n$ in the classical case. We use $\ket{\lambda_i}$ to denote this state just prior to receiving the input token $\bm{x_i}$. We assume that the initial latent state is the position-squeezed state $\ket{\lambda_1}=\ket{\hat{q}=0}^{\otimes n}$.

The trained components of the network are classical, given by four classical networks that are functions of the input $\bm{x_i}$ at a given time step:
\begin{align}
    \bm{\alpha}\left(\bm{x_i}\right)&=\left(v_1,\ldots,v_k\right)\in\binom{\left[n\right]}{k};\\
    \bm{\beta}\left(\bm{x_i}\right)&=\left(v_1,\ldots,v_k\right)\in\binom{\left[n\right]}{k};\\
    \gamma\left(\bm{x_i}\right)&\in\mathbb{R};\\
    \bm{\kappa}\left(\bm{x_i}\right)&=\left(\bm{\phi},\bm{\theta}\right)\in\mathbb{R}^k\times\mathbb{R}^{2^k}.
\end{align}
Here, $\binom{\left[n\right]}{k}$ is the set of all sets of $k$ distinct elements chosen from $\left\{1,\ldots,n\right\}$. In what follows, we use the quantum gate:
\begin{equation}
    \operatorname{C}^{k-1}Z_{\overline{v}}\left(\gamma\right)=\exp\left(-\ci\gamma\prod_{i=1}^k\hat{q}_{v_i}\right),
\end{equation}
i.e., the continuous-variable analogue of a multi-control $Z$-rotation acting on qumodes indexed by $\overline{v}\in\binom{\left[n\right]}{k}$. We also define:
\begin{equation}\label{eq:g_def}
    G_{\overline{v}}\left(\bm{\phi},\bm{\theta}\right):=\exp\left(-\ci\sum_{i=1}^k\phi_i\hat{p}_{v_i}\right)\exp\left(-\ci\sum_{\overline{w}\in 2^{\overline{v}}}\theta_{\overline{w}}\prod_{w_i\in\overline{w}}\hat{q}_{w_i}\right),
\end{equation}
where here $2^{\overline{v}}$ denotes the power set of $\overline{v}$. For appropriate choices of $\bm{\phi}$ and $\bm{\theta}$, $G_{\overline{v}}\left(\bm{\phi},\bm{\theta}\right)$ can be interpreted as a stabilizer of a \emph{(weighted) hypergraph state}~\cite{takeuchi2019resource}, a class of states in one-to-one correspondence with weighted hypergraphs. Given a hypergraph with hyperedges $\overline{v}\in E$ and corresponding weights $w_{\overline{v}}$, the associated hypergraph state is:
\begin{equation}
    \ket{\psi}=\prod_{\overline{v}\in E}C^{\left\lvert\overline{v}\right\rvert-1}Z_{\overline{v}}\left(w_{\overline{v}}\right)\ket{\hat{p}=0}^{\otimes n}.
\end{equation}
In our setting, the maximal cardinality of $\overline{v}\in E$ is $k$. These states are reviewed in further detail in Appendix~\ref{sec:seq_learning_background}.

\begin{figure}
    \begin{center}
        \includegraphics[width=\linewidth]{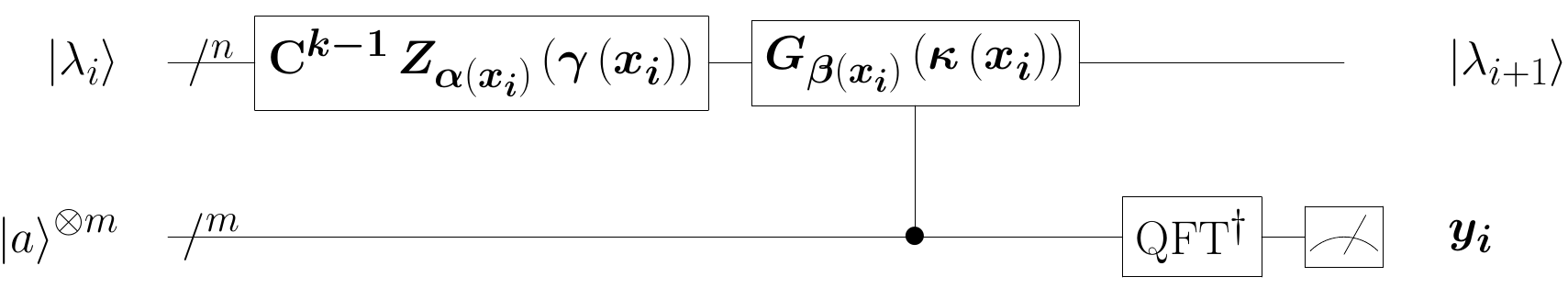}
        \caption{A $k$-hypergraph recurrent neural network ($k$-HRNN) unit cell as described in Sec.~\ref{sec:hrnn}, with Greek letters denoting trained classical networks and $G_\cdot$ as in Eq.~\eqref{eq:g_def}. $\bm{\alpha}$ and $\bm{\beta}$ output $k$-tuples of qumodes on which the associated operators have support. $\gamma$ and $\bm{\kappa}$ output the associated rotation angles. The ancillary $\ket{a}^{\otimes m}$ and $\text{QFT}$ perform phase estimation of the operators $G_\cdot$ as described in Sec.~\ref{sec:hrnn}. Measurement is either in the position basis (in the qumode setting) or in the computational basis (in the qubit setting).\label{fig:hrnn}}
    \end{center}
\end{figure}
We now describe the construction of a unit cell of a $k$-HRNN, where we assume for now that the dimension of each output token $y_i$ is $1$. First, the previously-described qumode analogue of a multi-control $Z$ rotation is applied to the latent state of the model:
\begin{equation}
    \ket{\lambda_i'}=\operatorname{C}^{k-1}Z_{\bm{\alpha}\left(\bm{x_i}\right)}\left(\gamma\left(\bm{x_i}\right)\right)\ket{\lambda_i}.
\end{equation}
Next, phase estimation (modulo $2\cpi$) of the operator $G_{\bm{\beta}\left(\bm{x_i}\right)}\left(\bm{\kappa}\left(\bm{x_i}\right)\right)$ is performed~\cite{PhysRevA.93.052304}. The measurement result $y_i$ is the output of the cell, and the post-measurement state $\ket{\lambda_{i+1}}$ is the new latent state of the model. The quantum circuit implementing this sequence of transformations is illustrated in Fig.~\ref{fig:hrnn}, where $\ket{a}=\ket{\text{GKP}}$ denotes the position-squeezed \emph{GKP state}~\cite{PhysRevA.64.012310}:
\begin{equation}
    \ket{\text{GKP}}\propto\sum\limits_{q=-\infty}^\infty\ket{\hat{q}=\cpi q}
\end{equation}
and $\operatorname{QFT}$ denotes the CV quantum Fourier transform $\mathcal{F}^{\otimes m}$. Higher-dimensional outputs $\bm{y_i}\in\mathbb{R}^m$ can be treated by performing $m$ measurements of operators of the form of $G_{\overline{v}}\left(\bm{\phi},\bm{\theta}\right)$ instead of just one. More concretely, one considers ordered sets of classical networks $\left(\bm{\beta}_h\left(\bm{x_i}\right)\right)_{h=1}^m$ and $\left(\bm{\kappa}_h\left(\bm{x_i}\right)\right)_{h=1}^m$, and sequentially performs phase estimation of the $G_{\bm{\beta}_h\left(\bm{x_i}\right)}\left(\bm{\kappa}_h\left(\bm{x_i}\right)\right)$ to read out the elements of $\bm{y_i}$.

By construction, qumode $k$-HRNNs have evolution generated via the commutator by the operators $\left\{\hat{p}_i,\prod_{v_j\in\overline{v}}\hat{q}_{v_j}\right\}$, where $\overline{v}$ is of degree at most $k$. This set is closed under the commutator as:
\begin{equation}
    \left[\hat{p}_i,\prod_{v_j\in\overline{v}}\hat{q}_{v_j}\right]=\frac{\ci}{2}\delta_{i\in\overline{v}}\prod_{v_j\neq i\in\overline{v}}\hat{q}_{v_j}\propto\prod_{w_j\in\overline{w}:=\overline{v}\setminus\left\{i\right\}}\hat{q}_{w_j}.\label{eq:comm_rel}
\end{equation}
Thus, for constant $k$, the dimension of the Lie subgroup $G$ to which the dynamics are constrained is $\operatorname{O}\left(n^k\right)$. Furthermore, the initial state of the network is a stabilizer state that is stabilized by group elements generated by the $\hat{p}_i$. Finally, measurement is in the position basis, and the MMD loss discussed in Sec.~\ref{sec:quant_gen_mods} is a function of expectation values of low-degree polynomials of the $\hat{q}_i$~\cite{rudolph2023trainability}. These facts taken together imply the absence of quantum trainability barriers (QTBs), as discussed in Sec.~\ref{sec:quant_gen_mods}.

Indeed, QTBs can be avoided more directly with only a small overhead in time complexity during training. While training a $k$-HRNN on a quantum devices would take time at least $\operatorname{\Omega}\left(n^k\right)$ (see Sec.~\ref{sec:quant_gen_mods}), there exists a known classical algorithm for simulating Lie subgroup-constrained dynamics which, when applied to estimating the loss of a $k$-HRNN, achieves a time complexity of $\operatorname{O}\left(n^{2.8k}\right)$ per unit cell~\cite{PhysRevLett.97.190501}. For this reason, QTBs can be circumvented by avoiding training on a quantum device entirely, incurring only a cubic overhead in time complexity! We thus emphasize that the separation we discuss in Sec.~\ref{sec:main_result_main_text} should be interpreted as an arbitrary-degree polynomial quantum-classical resource separation only during inference, i.e., after the model has been trained and deployed. We further discuss this subtlety in Appendix~\ref{sec:deets_on_complexity}.

\subsection{Qubit Construction}

The qubit construction is conceptually identical to the qumode construction, so we only highlight its differences. In the qubit setting, $\ket{\lambda}^{\otimes n}$ is an $n$-qubit quantum state initialized in the all-zero state $\ket{\lambda_1}=\ket{0}^{\otimes n}$, and the gates applied by the network are of the form:
\begin{align}
    \operatorname{C}^{k-1}Z_{\overline{v}}\left(\gamma\right)&=\exp\left(-\ci\gamma\ket{\bm{0}}\bra{\bm{0}}_{\overline{v}}\right),\\
    G_{\overline{v}}\left(\bm{\phi},\bm{\theta}\right)&:=\exp\left(-\ci\sum_{i=1}^k\phi_i X_{v_i}\right)\exp\left(-\ci\sum_{\overline{w}\in 2^{\overline{v}}}^k\theta_{\overline{w}}\ket{\bm{1}}\bra{\bm{1}}_{\overline{w}}\right),
\end{align}
where we have used the shorthand:
\begin{equation}
    \ket{\bm{1}}\bra{\bm{1}}_{\overline{v}}:=\prod_{v_i\in\overline{v}}\left(\frac{I_{2^n}-Z_{v_i}}{2}\right).
\end{equation}
Phase estimation of $G_{\overline{v}}\left(\bm{\phi},\bm{\theta}\right)$ can be performed using the same circuit as in the CV setting, diagrammed in Fig.~\ref{fig:hrnn}. In practice, phase estimation can also be performed to finite precision using the standard qubit-based algorithm; that is, where $\ket{a}=\ket{+}$ and $\operatorname{QFT}$ denotes the qubit-based quantum Fourier transform~\cite{Nielsen_Chuang_2010_fourier}. All other aspects of the network are identical to the CV setting.

The main distinction between the qubit setting and the qumode setting is that, unlike the qumode setting, the subspace explored by the qubit $k$-HRNN is \emph{not} low-dimensional. Indeed, in the qubit setting sampling from a $k$-HRNN is at least as difficult as sampling from the well-known instantaneous quantum polynomial-time (IQP) circuit class, widely believed to be exponentially difficult to classically simulate to small total variation distance even when $k=2$~\cite{bremner2017achievingquantum}. For this reason we cannot use the same arguments as in the qumode setting for the avoidance of QTBs in the training of the qubit-based network. In Appendix~\ref{sec:qubit_based_sep} we give a slightly more complicated construction of the qubit-based model which we argue is efficient to train, but no longer has the advantage of being independent of the token index $i$. \change{Unfortunately, the model no longer being translationally invariant complicates practical implementations in a typical streaming setting when using this training method. One potential heuristic to circumvent this shortcoming might be to ``round'' the CV model---which can be efficiently trained classically in a translationally invariant setting, as discussed in Sec.~\ref{sec:qumode_hrnn_inf}---to a qubit model at constant precision. We hope to explore this approach further in future work.}

\section{Arbitrary Polynomial Expressivity Separations Over Classical Neural Networks}\label{sec:main_result_main_text}

\subsection{\texorpdfstring{$\left(\ell,n,k\right)$}{(l,n,k)}-Hypergraph Stabilizer Measurement Translation}\label{sec:k_chmt}

Having introduced $k$-HRNNs, we now describe the classical sequence modeling task with which we will prove our separation. For clarity we here omit some technical details which are left for Appendix~\ref{sec:task_desc}.

We call the task we construct \emph{$\left(\ell,n,k\right)$-hypergraph stabilizer measurement translation} ($\left(\ell,n,k\right)$-HSMT), with both a qumode- and a qubit-based variant. Unless otherwise stated, we assume that $k$ is constant with respect to $n$. For both variants of $\left(\ell,n,k\right)$-HSMT, we consider $\ell$-long input sequences
\begin{equation}
    \bm{x}=\left(\bm{x_1},\ldots,\bm{x_\ell}\right)^\intercal,
\end{equation}
where
\begin{equation}
    \bm{x_i}=\left(\bm{\alpha_i},\bm{\beta_i},\gamma_i,\bm{\kappa_i}\right)\in\left[n\right]^k\times\left[n\right]^k\times\mathbb{R}\times\left(\mathbb{R}^k\times\mathbb{R}^{2^k}\right).
\end{equation}
The task is to sample from the measurement distribution $p\left(y_1,\ldots,y_\ell\mid\bm{x_1},\ldots,\bm{x_\ell}\right)$ of either a qumode- or a qubit-based $k$-HRNN, where the $i$th unit cell applies operations parameterized by---as described in Sec.~\ref{sec:hrnn}---$\left(\bm{\alpha_i},\bm{\beta_i},\gamma_i,\bm{\kappa_i}\right)$. By construction, a $k$-HRNN can sample from this distribution exactly with $n$ qumodes or qubits and a constant number of gates per unit cell.

\subsection{Expressivity Separations}

We now give memory lower bounds on classical neural networks performing $\left(\ell,n,k\right)$-HSMT to a finite backward cross entropy whenever $k\geq 2$. We here only give a sketch of our arguments, leaving detailed proofs for Appendix~\ref{sec:proof_of_express_sep}. Our main results follow from two facts that we prove:
\begin{proposition}
    Let $p\left(\bm{y}\mid\bm{x}\right)$ be the distribution corresponding to the $\left(\ell,n,k\right)$-HSMT task for some $\ell\geq\binom{n}{k}+n$. There exists a $\binom{n}{k}-1$-dimensional subspace $\mathcal{X}$ of inputs where the following holds: consider any triplet of distinct $\binom{n}{k}$-token sequences $\left(\bm{x},\bm{x'},\bm{x''}\right)\in\mathcal{X}^{\times 3}$. There exists some $n$-token sequence $\bm{\tilde{x}}$ such that, for all $\bm{y}$,
    \begin{equation}\label{eq:prod_equal_zero}
        p\left(\bm{y}\mid\bm{x}\oplus\bm{\tilde{x}}\right)\times p\left(\bm{y}\mid\bm{x'}\oplus\bm{\tilde{x}}\right)\times p\left(\bm{y}\mid\bm{x''}\oplus\bm{\tilde{x}}\right)=0.
    \end{equation}
    Furthermore, this property is due to \emph{contextuality}.\label{prop:cont_meas_seqs}
\end{proposition}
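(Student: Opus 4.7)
The plan is to design $\mathcal{X}$ so that each $\bm{x}\in\mathcal{X}$ uses its $\binom{n}{k}$ tokens to prepare a distinct weighted hypergraph state on the $n$-qumode latent register, and so that the appended length-$n$ suffix $\bm{\tilde{x}}$ performs $n$ stabiliser phase-estimation measurements tailored to the triple under consideration. First I would fix an enumeration $\overline{v}^{(1)},\ldots,\overline{v}^{(\binom{n}{k})}$ of $\binom{[n]}{k}$ and let each $\bm{x}\in\mathcal{X}$ be the sequence whose $i$th token applies $\operatorname{C}^{k-1}Z_{\overline{v}^{(i)}}(\gamma_i)$ with parameter $\gamma_i\in\mathbb{R}$, while its $G$-based readout is set to act as the identity so that $y_i$ is a constant. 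A preliminary Fourier-transform step folded into the first token converts the initial $\ket{\hat{q}=0}^{\otimes n}$ into $\ket{\hat{p}=0}^{\otimes n}$ so that the subsequent $\operatorname{C}^{k-1}Z$ gates act nontrivially, and the resulting latent state is the weighted CV hypergraph state parametrised by $\bm{\gamma}\in\mathbb{R}^{\binom{n}{k}}$. I would take $\mathcal{X}$ to be the affine hyperplane obtained by quotienting out a single global redundancy of the induced stabiliser eigenstructure---for instance, a uniform shift that only contributes a global phase---yielding $\binom{n}{k}-1$ effective free dimensions.

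Given distinct $\bm{x},\bm{x'},\bm{x''}\in\mathcal{X}$ with weight vectors $\bm{\gamma},\bm{\gamma'},\bm{\gamma''}$, I would build $\bm{\tilde{x}}$ as a block of $n$ tokens, the $j$th of which performs phase estimation of a stabiliser $G_{\overline{v}_j}(\bm{\phi}_j,\bm{\theta}_j)$ whose $\bm{\theta}_j$ is determined by the $\gamma_i$'s of edges incident to qumode $j$, with a small adaptive correction computed from all three weight vectors to guarantee pairwise coordinate-wise distinctness. Each of the three prepared states is then an exact joint eigenstate of all $n$ measured operators with its own pattern $\bm{y},\bm{y'},\bm{y''}$ of eigenvalues, and in the idealised squeezed limit the three distributions $p(\bm{y}\mid\bm{x}\oplus\bm{\tilde{x}})$, $p(\bm{y}\mid\bm{x'}\oplus\bm{\tilde{x}})$, and $p(\bm{y}\mid\bm{x''}\oplus\bm{\tilde{x}})$ each collapse to a singleton; pairwise distinctness of those singletons immediately yields Eq.~\eqref{eq:prod_equal_zero}.

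To argue the disjoint-support property is genuinely ``due to contextuality,'' I would recast the induced family of stabiliser outcomes over $\mathcal{X}$ as an empirical model in the Abramsky--Brandenburger sheaf-theoretic framework and show it is \emph{strongly contextual}: any noncontextual hidden-variable assignment of deterministic values to the stabiliser labels would have to simultaneously reproduce all three conditional distributions, which are supported on disjoint outcome sets, so no such global section can exist. Concretely I would reduce this to a Mermin/GHZ-style parity argument on the noncommuting stabilisers of weighted hypergraph states, using the commutation structure of Eq.~\eqref{eq:comm_rel} to extract the required global inconsistency whenever $k\geq 2$.

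The main obstacle is the construction of a single $\bm{\tilde{x}}$ of length exactly $n$ that simultaneously witnesses pairwise distinguishability for every triple drawn from the $\binom{n}{k}-1$-dimensional family of weight vectors. The budget of $n$ measurements is tight, so the argument must exploit the full hypergraph stabiliser structure and use the three weight vectors to adaptively choose which stabilisers to phase-estimate on each qumode, ruling out accidental eigenvalue coincidences via a generic-position argument on the weights; making the contextuality connection rigorous rather than merely analogical is the other delicate step.
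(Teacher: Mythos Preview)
Your proposal has a genuine structural gap. You want the $n$-token suffix $\bm{\tilde{x}}$ to perform phase estimation of $n$ stabilisers chosen so that \emph{each of the three prepared hypergraph states is an exact joint eigenstate} of the same measured operators, with pairwise distinct eigenvalue patterns. But the stabiliser $s_j$ of a weighted hypergraph state depends on the weight vector $\bm{\gamma}$ through the $U_{\overline{v}\setminus\{j\}}(e_{\overline{v}})$ factors (see Eq.~\eqref{eq:hermiticity_of_stab}); three distinct weight vectors give three \emph{different} stabiliser groups, and a stabiliser of one state is generically not even diagonal in the eigenbasis of another. So the claim that the three conditional distributions each ``collapse to a singleton'' under a single commuting family of $n$ measurements is unjustified and, in general, false. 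Your ``small adaptive correction'' cannot fix this: the obstruction is algebraic, not a generic-position issue.

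The paper's route is quite different and does not attempt simultaneous diagonalisation. One of the three sequences is arranged to leave the latent register in the \emph{trivial} state $\ket{0}^{\otimes n}$ (resp.\ $\ket{\bm{0}}_{\bm{\hat{q}}}$), while the other two prepare hypergraph states whose weights on some hyperedge $\overline{v}$ differ by exactly $1$. The suffix then (i) measures $Z_i$ (resp.\ $\hat{q}_i$) on the $n-k$ modes outside $\overline{v}$, which either rules out the trivial state or collapses the two hypergraph states onto a single hyperedge; (ii) measures one diagonal operator built from the polynomial $R$ of Eq.~\eqref{eq:R_def}, which stabilises the trivial state and commutes with the relevant vertex stabilisers; and (iii) measures the product of $X_i$ over $\overline{v}$, for which the two collapsed hypergraph states are $\pm 1$ eigenstates. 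This is precisely a Mermin--Peres magic-square antidistinguishing sequence (Lemmas~\ref{lemma:hypergraph_state_context} and~\ref{lemma:qumode_hypergraph_state_context}): at each stage one of the three candidates is ruled out, which is exactly Eq.~\eqref{eq:prod_equal_zero}. You allude to a Mermin/GHZ parity argument only at the very end, as a way of certifying contextuality after the fact; in the paper it \emph{is} the construction. The missing ideas are thus (a) including the trivial state as one member of the triple, and (b) replacing ``simultaneous eigenstate with distinct patterns'' by ``antidistinguishing sequence via a magic-square parity obstruction.''
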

\begin{proposition}
    Any classical neural network satisfying Assumptions~\ref{ass:c2} and~\ref{ass:strong_sub} with latent space dimension less than $\binom{n}{k}-1$ maps some finite-volume subspace of $\mathcal{X}$ to a single point in latent space.\label{prop:fiber_bundle_struct}
\end{proposition}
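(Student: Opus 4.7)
The plan is to apply a differential-topological dimension-counting argument based on the constant rank theorem, exploiting the fact that $\mathcal{X}$ has strictly higher dimension than the classical latent space.

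\emph{Step 1 (setup).} Let $\Phi : \mathcal{X} \to L$ denote the map that sends an input sequence $\bm{x} \in \mathcal{X}$ to the classical network's resulting latent state. In the encoder-decoder setting this is simply $\Phi = \mathcal{E}|_{\mathcal{X}}$. In the autoregressive setting, $\Phi$ is the latent-coordinate projection of the composition $\mathcal{F}_{\binom{n}{k}} \circ \cdots \circ \mathcal{F}_1$ started at the fixed initial latent $\bm{\lambda_1}$. By Assumption~\ref{ass:c2} and the chain rule, $\Phi$ is $C^2$ on $\mathcal{X}$.

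\emph{Step 2 (submersion).} Next I would verify that $\mathrm{d}\Phi$ has full rank $\dim L$ everywhere on $\mathcal{X}$. Interpreting Assumption~\ref{ass:strong_sub} as a positive lower bound on the singular values of the Jacobian (equivalently, a lower bound on the generalized Jacobian determinant $\sqrt{\det\left(JJ^{\intercal}\right)}$), the encoder-decoder case is immediate. In the autoregressive case, the invertibility of each $\mathrm{d}\mathcal{F}_i$ together with the chain rule shows that the concatenated partial derivatives $\partial \Phi / \partial \bm{x_i}$ collectively span $T_{\Phi(\bm{x})} L$ at every $\bm{x} \in \mathcal{X}$, so $\mathrm{d}\Phi$ is surjective.

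\emph{Step 3 (conclusion).} Since $\Phi$ is a $C^2$ submersion of rank $\dim L < \binom{n}{k}-1 = \dim \mathcal{X}$, the constant rank theorem produces, in a neighborhood $U$ of any $\bm{x_0} \in \mathcal{X}$, smooth coordinates in which $\Phi$ is the projection onto the first $\dim L$ coordinates. Consequently, $\Phi^{-1}\left(\Phi\left(\bm{x_0}\right)\right) \cap U$ is a smoothly embedded submanifold of $U$ of dimension at least
\begin{equation*}
    \dim \mathcal{X} - \dim L \;\geq\; \left(\binom{n}{k}-1\right) - \left(\binom{n}{k}-2\right) \;=\; 1.
\end{equation*}
This submanifold has positive intrinsic Hausdorff measure in its own dimension and is mapped by $\Phi$ to the single point $\Phi(\bm{x_0}) \in L$, providing the claimed ``finite-volume subspace.''

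\emph{Main obstacle.} The technical core is Step 2 in the autoregressive setting: each $\mathcal{F}_i$ ingests the combined input $(\bm{\lambda_i}, \bm{x_i})$, and only certain partial-derivative blocks --- indexed by which token $\bm{x_i}$ is being perturbed, projected onto latent coordinates, and propagated through the remainder of the recursion --- enter $\mathrm{d}\Phi$. Extracting a uniform lower bound on the singular values of the block-concatenated Jacobian from the invertibility of each $\mathrm{d}\mathcal{F}_i$, uniformly over $\mathcal{X}$, is where most of the analytic work sits; once this submersion property is in hand, the constant rank theorem delivers the conclusion essentially for free.
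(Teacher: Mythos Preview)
Your approach is essentially correct for the proposition as stated, and is a local, more elementary analogue of what the paper does. The paper invokes Rabier's ``global implicit function theorem'' (an Ehresmann-type fibration result) to show that the projection $\varPi_L \circ \mathcal{F}$ is a \emph{globally trivial} $C^1$ fiber bundle over $L$, with contractible and---crucially---\emph{unbounded} fibers. Your constant rank argument recovers only the local statement that each fiber is a positive-dimensional embedded submanifold near any point, which suffices for Proposition~\ref{prop:fiber_bundle_struct} verbatim but not for how it is used downstream: in the proof of Theorem~\ref{thm:mem_lb_inf} one must locate \emph{three specific} points on a single fiber (one corresponding to trivial operations and two to hypergraph states whose edge weights differ by exactly $1$), and this is extracted from unboundedness and path-connectedness of the fiber, not merely from positive local dimension. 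So your route buys simplicity at the cost of a weaker conclusion that would need to be strengthened later.

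On your flagged obstacle in Step~2: the paper does not actually derive the submersion property of the composed map from per-cell invertibility either. The formal hypotheses in Appendix~\ref{sec:online_sep} directly assume that the composition $\mathcal{F}$ (after $m$ tokens) is a strong submersion on $U$, together with a boundary condition ensuring no sequence escaping $U$ lands in $L$. The informal Assumption~\ref{ass:strong_sub} in the main text is a simplification; the chain-rule bootstrapping you sketch (full-rank $\mathrm{d}\mathcal{F}_i$ implying surjective $\mathrm{d}\Phi$) genuinely fails in general, since invertibility of the full block $\mathrm{d}\mathcal{F}_i$ does not control the sub-block $\partial\bm{\lambda_{i+1}}/\partial\bm{x_i}$. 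You should therefore state the submersion property of $\Phi$ as a hypothesis rather than attempt to derive it.
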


Before proceeding with a proof sketch, we give some intuition behind these two Propositions---as well as their implications---by considering $k=2$ as a concrete example.

\subsubsection{Intuition Behind Proposition~\ref{prop:cont_meas_seqs}}

We begin with a simple example of Proposition~\ref{prop:cont_meas_seqs} for $n=2$ qubits. Recall that the $\left(\ell,2,2\right)$-HSMT correponds to simulating the dynamics of a $2$-HRNN for $\ell$ time steps. We consider a triplet of $2$-token input sequences $\left(\bm{x},\bm{x'},\bm{x''}\right)$ which, when conditioned on the first two output tokens $y_1=+1$ and $y_2=+1$, respectively correspond to the preparation of three latent states:
\begin{align}
    \ket{\lambda_2}&:=\ket{++},\\
    \ket{\lambda_2'}&:=\operatorname{C}Z_{1,2}\ket{++},\\
    \ket{\lambda_2''}&:=\ket{00}
\end{align}
in the associated $k$-HRNN. Stabilizers of these three states are written as the rows of Table~\ref{tab:magic_square}. Note that the rows and columns all multiply to the identity matrix $I$ outside of the third column, which multiplies to $-I$. This is a famous example of a \emph{Mermin--Peres magic square}~\cite{PhysRevLett.65.3373}: no classical assignment of these observables are consistent with the parity constraints enforced by their products, even though each row and column of operators are mutually commuting. This property is known as \emph{contextuality}, which counterintuitively is known to be satisfiable in quantum theory~\cite{kochen1975problem,Abramsky_2011}.
\begin{table}
    \begin{center}
        \caption{An example of quantum contextuality using a Mermin--Peres magic square~\cite{PhysRevLett.65.3373}. All operators in each row and column commute. Additionally, the product of each row and column is the identity operator, except for the final column, which gives minus the identity. Thus, definite classical values cannot be assigned to each operator without yielding a contradiction.\label{tab:magic_square}}
        \begin{tabular}{c|c|c}
            $X_1$ & $X_2$ & $X_1 X_2$\\\hline
            $X_1 Z_2$ & $Z_1 X_2$ & $-X_1 Z_1 X_2 Z_2$\\\hline
            $Z_2$ & $Z_1$ & $Z_1 Z_2$
        \end{tabular}
    \end{center}
\end{table}

We now use this magic square to complete the sequences $\left(\bm{x},\bm{x'},\bm{x''}\right)$ to three sequences with associated conditional distributions having infinite cross entropy. We consider a $2$-token sequence $\bm{\tilde{x}}$ corresponding to the measurement of $Z_1 Z_2$ followed by $X_1 X_2$. From the definition of $\left(\ell,n,k\right)$-HSMT, we have:
\begin{align}
    p\left(y_3=+1,y_4=-1\mid\bm{x}\oplus\bm{\tilde{x}},y_1=+1,y_2=+1\right)&=0;\\
    p\left(y_3=+1,y_4=+1\mid\bm{x'}\oplus\bm{\tilde{x}},y_1=+1,y_2=+1\right)&=0;\\
    p\left(y_3=-1\mid\bm{x''}\oplus\bm{\tilde{x}},y_1=+1,y_2=+1\right)&=0.
\end{align}
This is enough to show Eq.~\eqref{eq:prod_equal_zero} when additionally conditioned on $y_1=y_2=+1$. Similar arguments for all other choices of $y_1$ and $y_2$ show this is indeed the case for all $\bm{y}$. This argument also generalizes for any choice of three states $\ket{\lambda_2},\ket{\lambda_2'},\ket{\lambda_2''}$, as long as they are stabilized by operators comprising the rows of a Mermin--Peres magic square.

For general $k$---as well as for qumodes---the proof of the Proposition proceeds similarly to the logic laid out here, though now the magic square is composed of operators in the $k-1$st level of the Clifford hierarchy~\cite{gottesman1999demonstrating} rather than Pauli operators. Formal proofs for qubits and qumodes are given in Appendix~\ref{sec:dist_meas_seqs} as the proofs of Lemmas~\ref{lemma:hypergraph_state_context} and~\ref{lemma:qumode_hypergraph_state_context}, respectively.

\subsubsection{Intuition Behind Proposition~\ref{prop:fiber_bundle_struct}}

We now sketch our proof of Proposition~\ref{prop:fiber_bundle_struct}, which relies on Assumptions~\ref{ass:c2} and~\ref{ass:strong_sub}. At a basic level, this Proposition is a statement that if the classical neural network is ``sufficiently nice,'' $\mathcal{X}$ cannot be injectively represented in the latent space $L$ of the network if $\dim\left(L\right)<\dim\left(\mathcal{X}\right)=\binom{n}{k}-1$.

To demonstrate this we show that any such classical network exhibits a \emph{fiber bundle} structure on some finite-volume subspace of inputs, where each fiber is ``sufficiently large'' such that it intersects $\mathcal{X}$ at three points (or more). We give a sketch of this structure in Fig.~\ref{fig:fiber_bundle}. Intuitively, this result can be seen as a consequence of the inverse function theorem, where Assumption~\ref{ass:strong_sub} ensures that the theorem holds on some finite-volume subspace of inputs. More formally we use a so-called ``global implicit function theorem'' described in Ref.~\cite{rabier1997ehresmann}.

Formal proofs of this Proposition differ slightly depending on whether the classical model is an autoregressive model or an encoder-decoder model. These two cases are given as Theorems~\ref{thm:online_sep_qubit} and~\ref{thm:enc_dec_sep_qubit}, respectively, both stated and proved in Appendix~\ref{sec:proof_of_express_sep}.
\begin{figure}
    \begin{center}
        \includegraphics[width=0.5\linewidth]{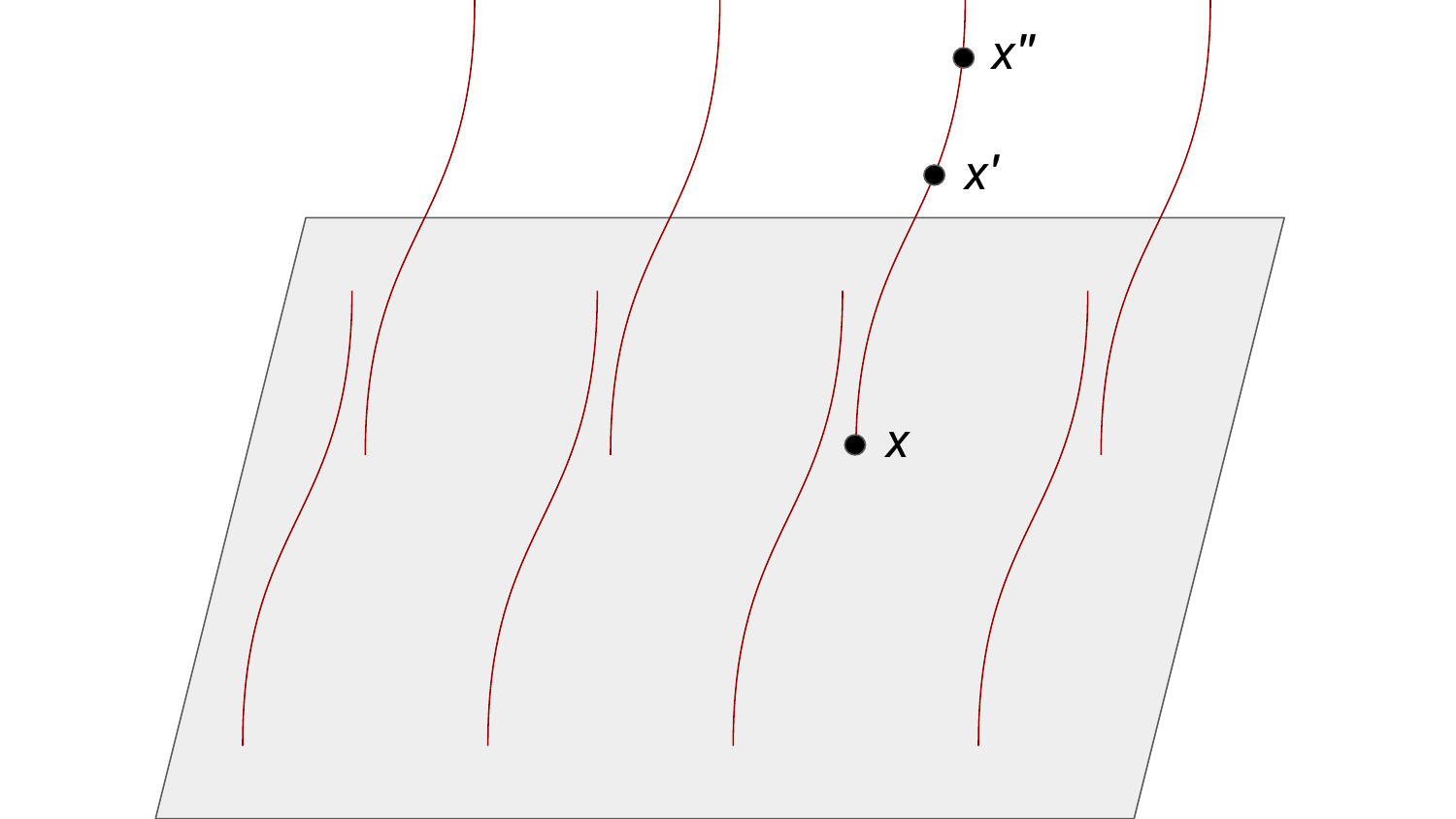}
        \caption{A representation of the noninjectivity of a classical network performing $\left(\ell,n,k\right)$-HSMT as described in Proposition~\ref{prop:fiber_bundle_struct}. On some finite-volume subspace of inputs, any classical network noninjectively maps all points on a red line (a ``fiber'') to the same point in latent space, isomorphic to the gray parallelogram (the ``base space''). $\bm{x},\bm{x'},\bm{x''}\in\mathcal{X}$ label input sequences as described in Proposition~\ref{prop:cont_meas_seqs}. Due to the noninjectivity of the network, $\bm{x}$, $\bm{x'}$, and $\bm{x''}$ are indistinguishable to the classical neural network.\label{fig:fiber_bundle}}
    \end{center}
\end{figure}

\subsubsection{Main Result}

Our main result follows from Propositions~\ref{prop:cont_meas_seqs} and~\ref{prop:fiber_bundle_struct}. As previously stated, our formal proofs and statements depend on whether the classical model is an autoregressive model (stated as  Theorem~\ref{thm:online_sep_qubit}) or an encoder-decoder model (stated as Theorem~\ref{thm:enc_dec_sep_qubit}), both stated and proved in Appendix~\ref{sec:proof_of_express_sep}. For simplicity, we here gloss over the details which distinguish these two Theorems and give a high-level proof sketch of our separations.
\begin{theorem}[Memory lower bound for $\left(\ell,n,k\right)$-HSMT, informal]
    Consider the qubit or qumode $\left(\ell,n,k\right)$-HSMT task as described in Sec.~\ref{sec:k_chmt} with $\ell\geq\binom{n}{k}+n$. A classical neural network satisfying Assumptions~\ref{ass:c2} and~\ref{ass:strong_sub} must have latent space dimension at least $\binom{n}{k}-1$ to perform the $\left(\ell,n,k\right)$-HSMT task to any finite backward cross entropy. A $k$-HRNN on $n$ qubits or qumodes and constant gate complexity per unit cell can perform this task to zero error.\label{thm:mem_lb_inf}
\end{theorem}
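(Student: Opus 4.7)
The plan is to split the theorem into its two halves: the quantum upper bound is essentially tautological given the construction of Sec.~\ref{sec:hrnn}, while the classical lower bound is a short pigeonhole argument that threads Proposition~\ref{prop:fiber_bundle_struct} (noninjectivity) through Proposition~\ref{prop:cont_meas_seqs} (contextuality).

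For the upper bound, observe that the definition of $\left(\ell,n,k\right)$-HSMT in Sec.~\ref{sec:k_chmt} is exactly the measurement distribution of a $k$-HRNN whose $i$th unit cell applies the operations parameterized by the input token $\bm{x_i}$. Feeding $\bm{x}$ token-by-token into such a $k$-HRNN on $n$ qubits or qumodes therefore samples from $p\left(\bm{y}\mid\bm{x}\right)$ exactly. For constant $k$, each unit cell uses $O\left(2^k\right)=O\left(1\right)$ gates, as claimed.

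For the lower bound I would argue by contradiction. Suppose some classical neural network satisfying Assumptions~\ref{ass:c2} and~\ref{ass:strong_sub} with latent space dimension strictly less than $\binom{n}{k}-1$ achieves finite backward cross entropy on $\left(\ell,n,k\right)$-HSMT. By Proposition~\ref{prop:fiber_bundle_struct}, some finite-volume subspace $\mathcal{F}\subseteq\mathcal{X}$ is collapsed to a single latent state $\bm{\lambda^*}$. Pick three distinct $\bm{x},\bm{x'},\bm{x''}\in\mathcal{F}$ (possible since $\mathcal{F}$ has positive volume) and apply Proposition~\ref{prop:cont_meas_seqs} to obtain an $n$-token suffix $\bm{\tilde{x}}$ with
\begin{equation*}
    p\left(\bm{y}\mid\bm{x}\oplus\bm{\tilde{x}}\right)\,p\left(\bm{y}\mid\bm{x'}\oplus\bm{\tilde{x}}\right)\,p\left(\bm{y}\mid\bm{x''}\oplus\bm{\tilde{x}}\right)=0
\end{equation*}
for every $\bm{y}$. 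Because the classical network's internal state after reading any of $\bm{x},\bm{x'},\bm{x''}$ equals $\bm{\lambda^*}$, its model distributions on the completed inputs must coincide:
\begin{equation*}
    q_{\bm{\theta}}\left(\bm{y}\mid\bm{x}\oplus\bm{\tilde{x}}\right)=q_{\bm{\theta}}\left(\bm{y}\mid\bm{x'}\oplus\bm{\tilde{x}}\right)=q_{\bm{\theta}}\left(\bm{y}\mid\bm{x''}\oplus\bm{\tilde{x}}\right)=:\tilde{q}\left(\bm{y}\right).
\end{equation*}
Finite \emph{backward} cross entropy on each of the three inputs forces $\operatorname{supp}\left(\tilde{q}\right)$ to lie inside the support of each of the three target conditionals; but then any $\bm{y}\in\operatorname{supp}\left(\tilde{q}\right)$ would make the displayed product strictly positive, contradicting the $=0$ equality. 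Hence $\tilde{q}$ has empty support, which is absurd for a probability distribution.

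The step I expect to require the most care is the passage ``same latent state $\Rightarrow$ same conditional distribution on the suffix $\bm{\tilde{x}}$.'' For autoregressive networks this is immediate since the cells $\mathcal{F}_i$ processing the suffix receive identical latent inputs in all three branches; for encoder-decoder models one must fix $\bm{\tilde{x}}$ and apply the fiber structure to the restricted encoder $\mathcal{E}\left(\,\cdot\oplus\bm{\tilde{x}}\right)$, which is why the formal statement splits into Theorems~\ref{thm:online_sep_qubit} and~\ref{thm:enc_dec_sep_qubit}. A secondary subtlety is the deliberate choice of \emph{backward} rather than forward cross entropy: backward finiteness is exactly what guarantees $\operatorname{supp}\left(\tilde{q}\right)\subseteq\operatorname{supp}\left(p\left(\cdot\mid\bm{x}\oplus\bm{\tilde{x}}\right)\right)$, the support-inclusion needed to convert the Mermin--Peres-style product obstruction into the contradiction above.
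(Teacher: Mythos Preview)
Your proposal is correct and follows essentially the same route as the paper's own proof sketch: reduce the quantum upper bound to the construction of the $k$-HRNN, and derive the classical lower bound by contradiction via Proposition~\ref{prop:fiber_bundle_struct} (three indistinguishable prefixes in $\mathcal{X}$) composed with Proposition~\ref{prop:cont_meas_seqs} (a suffix making the three target conditionals jointly unsupported). Your explicit unpacking of why \emph{backward} cross entropy forces $\operatorname{supp}(\tilde{q})\subseteq\operatorname{supp}\bigl(p(\cdot\mid\bm{x}\oplus\bm{\tilde{x}})\bigr)$, and your remark that the encoder--decoder case requires re-running the fiber argument with the suffix fixed, are exactly the two subtleties the paper flags when it defers to Theorems~\ref{thm:online_sep_qubit} and~\ref{thm:enc_dec_sep_qubit}.
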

\begin{proof}[Proof sketch]
    We prove this statement by contradiction. Assume there exists a classical model of latent space dimension less than $\binom{n}{k}-1$ that performs the $\left(\ell,n,k\right)$-HSMT task to some finite backward cross entropy. Let $\mathcal{X}$ be as in Proposition~\ref{prop:cont_meas_seqs}. By Proposition~\ref{prop:fiber_bundle_struct}, there exists some $\binom{n}{k}$-token sequences $\bm{x},\bm{x'},\bm{x''}\in\mathcal{X}$ that map to the same point in the latent space of the classical model. By Proposition~\ref{prop:cont_meas_seqs}, these inputs can be completed to $\ell$-token sequences $\bm{x}\oplus\bm{\tilde{x}}$, $\bm{x'}\oplus\bm{\tilde{x}}$, and $\bm{x''}\oplus\bm{\tilde{x}}$, where the associated conditional distributions in the $\left(\ell,n,k\right)$-HSMT task have disjoint supports. However, as the model is unable to distinguish these inputs---as it maps these inputs to the same point in latent space---it is required to sample from some point in the intersection of their supports. This yields a contradiction.
\end{proof}

Before concluding we note that this lower bound also holds when $k$ scales with $n$, i.e., for $k=n/2$ this demonstrates an exponential classical memory lower bound for $\left(\ell,n,k\right)$-HSMT. The trade-off is that the $k=n/2$-HRNN which performs this task to zero error is not efficiently trainable by the discussion of Sec.~\ref{sec:quant_gen_mods}, though this separation may be of independent interest as an exponential quantum-classical communication complexity separation where the parties in the quantum setting have identical strategies implemented using only a constant number of gates.

\section{Discussion}\label{sec:conc}

Our results construct a hierarchy quantum neural networks that not only avoids the trainability barriers typically encountered in quantum machine learning models, but also more expressive than classical neural networks by an arbitrary polynomial factor. We achieve this by considering a certain sequence learning task exhibiting a large amount of contextuality, and then demonstrating that the presence of this contextuality leads to a classical memory lower-bound in performing the task. Though the sequence lengths needed to witness a separation are large---on the order of $n^k$---this is natural in a streaming setting, where one wishes to perform sequence modeling in a way that minimizes the resources required to perform inference per token from an arbitrarily-long sequence.

Our introduced hierarchy of quantum models tracts nicely with experimental capabilities: given the ability to implement $k$-local gates, one can achieve a $\operatorname{\Theta}\left(\binom{n}{k}\right)$-memory separation over classical neural networks in performing the translation task we here introduce. This makes our hierarchy a natural fit for implementation on Rydberg atom-based quantum architectures which naturally implement multi-control $Z$ gates~\cite{isenhower2011multibit}. Our model also has the benefit of being composed of operations for which error correction has successfully been experimentally demonstrated~\cite{bluvstein2023logical}. Cavity quantum electrodynamical (QED) systems are also a natural platform to implement our model, as coherent information from the CV degrees of freedom can be stored in the cavity while atomic degrees of freedom are read out to perform the necessary measurements~\cite{Kerman_2013,RevModPhys.93.025005}.

More work is required for demonstrating the practicality of our introduced advantage. First, qubit-based $k$-HRNNs are only efficiently trainable in a setting where the model is not translationally invariant; though this matters less in an offline learning scenario, it is unclear how a qubit $k$-HRNN would perform inference in a purely online setting. An alternative approach could be to consider an encoder-decoder variant of the qubit $k$-HRNN where translational invariance is maintained in both the encoder and decoder, though not together. Second, our protocol is here considered in a noise-free context, far from the experimental capabilities of most current quantum devices. However, our results follow from measurement sequences similar to those seen in the setting of quantum pseudo-telepathy games~\cite{Quanta22}, and previous work~\cite{bravyi2020quantum,caha2023colossal} has demonstrated instances where quantum separations based on pseudo-telepathy games can be made noise resilient. We hope to explore these connections in future work.

\change{Furthermore, it is not yet clear how commonly contextuality---the specific correlational structure with which we prove our separation---appears in classical data. Concurrent work~\cite{teo2025kcontextualityheuristicmemoryseparations} has begun to investigate this, showing that standard machine learning data sets do possess nontrivial contextual correlations and that this inhibits classical representations of the data. Of course, this does not imply that learning such data automatically yields a ``quantum advantage'' regarding model size. Quantum easiness is the more difficult side of the argument; outside of constructed settings such as ours, it is difficult to analytically show that a data set can be efficiently represented quantumly. Our result can be interpreted as a statement that semantic ambiguity measured as nontrivial contextuality inhibits classical representations of data but does not necessarily inhibit quantum ones, pointing toward data with this correlational structure as promising empirical testing grounds for quantum machine learning.}

HRNNs demonstrate that quantizing a very simple class of recurrent neural networks is enough to achieve a large expressivity separation over classical neural networks on a sequence modeling task. Given the tremendous progress in the performance of large language models in recent years with the introduction of, for instance, GPT-3~\cite{brown2020} and GPT-4~\cite{openai2023}, this begs the natural question: what can be achieved by quantizing more sophisticated classical models? We hope in the future to address this.

\section*{Acknowledgements}

We thank Cameron Calcluth and Mikhail D.\ Lukin for helpful discussion. E.R.A. was funded in part by the Walter Burke Institute for Theoretical Physics at Caltech and in part by the DARPA ONISQ program (grant number W911NF2010021). X.G. acknowledges support from the U.S.\ Department of Energy, Office of Science, National Quantum Information Science Research Centers, Quantum Systems Accelerator, NSF PFC grant No.\ PHYS 2317149, and start-up grants from CU Boulder.

\bibliography{main}

\appendix

\section{Weighted Hypergraph States}\label{sec:seq_learning_background}

\subsection{Qubit Weighted Hypergraph States}\label{sec:qubit_hypergraph}

We here give a bevy of facts about \emph{qubit weighted hypergraph states}. Though first introduced in Ref.~\cite{Webster2022xpstabiliser} in the context of the so-called XP stabilizer formalism, we here directly consider a stabilizer formalism of such states.

We first review the definition of qubit weighted hypergraph states, focusing for simplicity on cases where the associated hypergraphs are \emph{$k$-uniform hypergraphs}; namely, their only nontrivial hyperedges are associated with cardinality $k$ subsets of vertices. Let $G$ be a $k$-uniform weighted hypergraph with hyperedges labeled by sets $\overline{v}$ and with associated hyperedge weights $e_{\overline{v}}$. We denote as $\ket{G}$ the \emph{qubit $k$-uniform weighted hypergraph state associated with $G$}:
\begin{equation}
    \ket{G}:=\prod\limits_{\overline{v}\in G}U_{\overline{v}}\left(e_{\overline{v}}\right)\ket{+}^{\otimes n},
\end{equation}
where we use the shorthand:
\begin{equation}
    U_{\overline{v}}\left(e_{\overline{v}}\right):=\operatorname{C}^{k-1}Z_{\overline{v}}\left(-e_{\overline{v}}\right)=\exp\left(\ci\cpi e_{\overline{v}}\prod\limits_{i\in\overline{v}}\frac{I-Z_i}{2}\right)\label{eq:ckz_def}
\end{equation}
to ease on notation. It is easy to check the commutation relations when $i\in\overline{v}$:
\begin{align}
    X_iU_{\overline{v}}\left(e_{\overline{v}}\right)&=U_{\overline{v}\setminus\left\{i\right\}}\left(e_{\overline{v}}\right)U_{\overline{v}}\left(-e_{\overline{v}}\right)X_i,\label{eq:hypergraph_stab_comm_1}\\
    U_{\overline{v}}\left(e_{\overline{v}}\right)X_i&=X_iU_{\overline{v}\setminus\left\{i\right\}}\left(e_{\overline{v}}\right)U_{\overline{v}}\left(-e_{\overline{v}}\right).\label{eq:hypergraph_stab_comm_2}
\end{align}
By construction, $\ket{G}$ is the mutual $+1$ eigenstate of the commuting operators:
\begin{equation}
    \begin{aligned}
        s_i&=\prod\limits_{\overline{v}\ni i}U_{\overline{v}}\left(e_{\overline{v}}\right)X_i\left(\prod\limits_{\overline{v}\ni i}U_{\overline{v}}\left(e_{\overline{v}}\right)\right)^\dagger\\
        &=X_i\prod\limits_{\overline{v}\ni i}U_{\overline{v}\setminus\left\{i\right\}}\left(e_{\overline{v}}\right)U_{\overline{v}}\left(-2e_{\overline{v}}\right)\\
        &=\left(\prod\limits_{\overline{v}\ni i}U_{\overline{v}\setminus\left\{i\right\}}\left(-e_{\overline{v}}\right)U_{\overline{v}}\left(2e_{\overline{v}}\right)\right)X_i.\label{eq:hermiticity_of_stab}
    \end{aligned}
\end{equation}
From this it is immediately clear that the $s_i$ are their own inverses:
\begin{equation}
    s_i^2=I.
\end{equation}
Thus,
\begin{equation}
    s_i=\exp\left(-\ci\frac{\cpi}{2}\right)\exp\left(\ci\frac{\cpi}{2}s_i\right)
\end{equation}
and in particular $\ket{G}$ is also stabilized by all powers $s_i^\alpha$. It is also apparent from Eq.~\eqref{eq:hermiticity_of_stab} that the $s_i$ are Hermitian.

We now consider the special case where $G$ has only a single hyperedge $\overline{v}$. Consider the stabilizer of $\ket{G}$:
\begin{equation}
    s_{\overline{v}}:=\prod\limits_{i\in\overline{v}}s_i
\end{equation}
associated with some vertex set $\overline{v}$. By Eq.~\eqref{eq:hypergraph_stab_comm_2} it is apparent this can be written as:
\begin{equation}
    s_{\overline{v}}=\left(\prod\limits_{i\in\overline{v}}X_i\right)\exp\left(\ci\cpi e_{\overline{v}}R\right),\label{eq:R_def}
\end{equation}
where $R$ is some $e_{\overline{v}}$-independent multilinear degree-$k$ polynomial in the $\frac{I-Z_i}{2}$ with integer coefficients. As the commutation in Eq.~\eqref{eq:hypergraph_stab_comm_2} only introduces terms of up to a single degree lower than the commuted $U_{\overline{v}}$, we can also calculate the constant term $R_0$ in $R$:
\begin{equation}
    R_0=1.
\end{equation}
In particular,
\begin{equation}
    \exp\left(\ci\cpi e_{\overline{v}}R\right)\ket{0}^{\otimes n}=\exp\left(\ci\cpi e_{\overline{v}}\right)\ket{0}^{\otimes n}.
\end{equation}
Similarly, by Eq.~\eqref{eq:hypergraph_stab_comm_1}
\begin{equation}
    s_{\overline{v}}=\exp\left(\ci\cpi e_{\overline{v}}L\right)\prod\limits_{i\in\overline{v}}X_i,
\end{equation}
where once again $L$ is some $e_{\overline{v}}$-independent multilinear degree-$k$ polynomial in the $\frac{I-Z_i}{2}$. By the Hermiticity of $s_{\overline{v}}$ we must have that:
\begin{equation}
    L=-R.
\end{equation}
In particular, when $e_{\overline{v}}$ is an integer, we have by equating these expressions for $s_{\overline{v}}$ that:
\begin{equation}
    \left[\cos\left(\cpi e_{\overline{v}}\left(R-1\right)\right),\prod\limits_{i\in\overline{v}}X_i\right]=\bm{0},\label{eq:qubit_meas_commute}
\end{equation}
where:
\begin{equation}
    \cos\left(\cpi e_{\overline{v}}\left(R-1\right)\right)\ket{0}^{\otimes n}=\ket{0}^{\otimes n}.
\end{equation}

\subsection{Qumode Weighted Hypergraph States}\label{sec:qumode_hypergraph}

One can consider a qumode version of $k$-uniform weighted hypergraph states completely analogously to qubit weighted hypergraph states~\cite{takeuchi2019resource}. The standard definition of such states involves the sequential application of qumode multi-controlled phase operators associated with a $k$-uniform hypergraph $G$ on a momentum-squeezed state:
\begin{equation}
    \ket{G}:=\prod\limits_{\overline{v}\in G}\exp\left(\ci\cpi e_{\overline{v}}\prod\limits_{j\in\overline{v}}\hat{q}_j\right)\ket{\bm{0}}_{\bm{\hat{p}}}.
\end{equation}
Here, we use the notation $\ket{\bm{a}}_{\bm{\hat{p}}}=\bigotimes\ket{a_i}_{\hat{p}_i}=\bigotimes\ket{\hat{p}_i=a_i}$. This is the definition presented in Sec.~\ref{sec:qumode_hrnn_inf}.

In the formal proof of our results---presented in Appendix~\ref{sec:proof_of_express_sep}---we will consider a slightly modified definition. Rather than applying gates to a squeezed momentum state, we consider an initial \emph{GKP state}~\cite{PhysRevA.64.012310}:
\begin{equation}
    \ket{\text{GKP}}\propto\bigotimes_{i=1}^n\left(\sum\limits_{\alpha=-\infty}^\infty\ket{\alpha}_{\hat{q}_i}\right).\label{eq:stand_gkp_state}
\end{equation}
Specifically, we take \emph{qumode $k$-uniform weighted hypergraph states} associated with a $k$-uniform hypergraph $G$ to be defined as:
\begin{equation}
    \ket{G}:=\prod\limits_{\overline{v}\in G}U_{\overline{v}}\left(e_{\overline{v}}\right)\ket{\text{GKP}},
\end{equation}
where now in a qumode context we have defined:
\begin{equation}
    U_{\overline{v}}\left(e_{\overline{v}}\right):=\exp\left(\ci\cpi e_{\overline{v}}\prod\limits_{j\in\overline{v}}\hat{q}_j\right).
\end{equation}
It is once again easy to check the commutation relations when $i\in\overline{v}$:
\begin{align}
    X_iU_{\overline{v}}\left(e_{\overline{v}}\right)&=U_{\overline{v}\setminus\left\{i\right\}}\left(-e_{\overline{v}}\right)U_{\overline{v}}\left(e_{\overline{v}}\right)X_i,\label{eq:qumode_hypergraph_stab_comm_1}\\
    U_{\overline{v}}\left(e_{\overline{v}}\right)X_i&=X_iU_{\overline{v}\setminus\left\{i\right\}}\left(e_{\overline{v}}\right)U_{\overline{v}}\left(e_{\overline{v}}\right),\label{eq:qumode_hypergraph_stab_comm_2}
\end{align}
where now:
\begin{equation}
    X_i:=\exp\left(-2\ci\hat{p}_i\right).
\end{equation}
Similarly to qubit $k$-uniform weighted hypergraph states, the qumode version of these states are described by commuting stabilizers:
\begin{equation}
    \begin{aligned}
        s_i&=\prod\limits_{\overline{v}\ni i}U_{\overline{v}}\left(e_{\overline{v}}\right)X_i\left(\prod\limits_{\overline{v}\ni i}U_{\overline{v}}\left(e_{\overline{v}}\right)\right)^\dagger\\
        &=X_i\prod\limits_{\overline{v}\ni i}U_{\overline{v}\setminus\left\{i\right\}}\left(e_{\overline{v}}\right)
    \end{aligned}
\end{equation}
and their Hermitian conjugates. Unlike the qubit case, however, the $s_i$ are now not Hermitian.

We now consider the special case where $G$ has only a single hyperedge $\overline{v}$. Consider the stabilizer of $\ket{G}$:
\begin{equation}
    s_{\overline{v}}:=\prod\limits_{i\in\overline{v}}s_i
\end{equation}
associated with some vertex set $\overline{v}$. By Eq.~\eqref{eq:qumode_hypergraph_stab_comm_2} it is apparent this can be written as:
\begin{equation}
    s_{\overline{v}}=\left(\prod\limits_{i\in\overline{v}}X_i\right)\exp\left(\ci\cpi e_{\overline{v}}R\right),\label{eq:qumode_R_def}
\end{equation}
where $R$ is some $e_{\overline{v}}$-independent multilinear degree-$\left(k-1\right)$ polynomial in the $\hat{q}_i$ with integer coefficients. As the commutation in Eq.~\eqref{eq:qumode_hypergraph_stab_comm_2} only introduces terms of up to a single degree lower than the commuted $U_{\overline{v}}$, we can also calculate the constant term $R_0$ in $R$:
\begin{equation}
    R_0=1.
\end{equation}
In particular,
\begin{equation}
    \exp\left(\ci\cpi e_{\overline{v}}R\right)\ket{\bm{0}}_{\bm{\hat{q}}}=\exp\left(\ci\cpi e_{\overline{v}}\right)\ket{\bm{0}}_{\bm{\hat{q}}}.
\end{equation}

Unlike the qubit case, the $s_{\overline{v}}$ are now \emph{not} Hermitian, and particular it is not true that $\exp\left(\ci\cpi e_{\overline{v}}R\right)$ commutes through $\prod\limits_{i\in\overline{v}}X_i$ up to conjugation. However, this \emph{is} true up to stabilizers of $\ket{G}$ when $e_{\overline{v}}$ is an integer. Namely, using Eq.~\eqref{eq:qumode_hypergraph_stab_comm_2} we have that:
\begin{equation}
    \exp\left(\ci\cpi e_{\overline{v}}R\right)\left(\prod\limits_{i\in\overline{v}}X_i\right)=\left(\prod\limits_{i\in\overline{v}}X_i\right)\exp\left(\ci\cpi e_{\overline{v}}R\right)\exp\left(2\ci\cpi e_{\overline{v}}\tilde{R}\right),
\end{equation}
where $\tilde{R}$ is another $e_{\overline{v}}$-independent multilinear degree-$k$ polynomial in the $\hat{q}_i$ with integer coefficients and no constant term. As $\exp\left(2\ci\cpi e_{\overline{v}}\tilde{R}\right)$ commutes with $U_{\overline{v}}\left(e_{\overline{v}}\right)$ and as $\ket{\text{GKP}}$ is a uniform superposition state over squeezed integer positions,
\begin{equation}
    \exp\left(\ci\cpi e_{\overline{v}}R\right)\left(\prod\limits_{i\in\overline{v}}X_i\right)\ket{G}=\left(\prod\limits_{i\in\overline{v}}X_i\right)\exp\left(\ci\cpi e_{\overline{v}}R\right)\ket{G}.
\end{equation}
This is similar to an idea used in the work of Ref.~\cite{PRXQuantum.2.040345}, where generated high-weight errors in a proposed fault-tolerant quantum computing scheme were shown to be equivalent to low-weight errors up to stabilizers. We thus have when $e_{\overline{v}}$ is an integer that for all integer $\alpha$:
\begin{equation}
    \left[\exp\left(\ci\alpha\cpi e_{\overline{v}}\left(R-1\right)\right),\prod\limits_{i\in\overline{v}}X_i\right]\ket{G}=\bm{0},\label{eq:qumode_meas_commute}
\end{equation}
where:
\begin{equation}
    \exp\left(\ci\cpi e_{\overline{v}}\left(R-1\right)\right)\ket{\bm{0}}_{\bm{\hat{q}}}=\ket{\bm{0}}_{\bm{\hat{q}}}.
\end{equation}
As the projector onto the $\exp\left(\ci\cpi e_{\overline{v}}\left(R-1\right)\right)=1$ subspace can be written as integer powers of $\exp\left(\ci\cpi e_{\overline{v}}\left(R-1\right)\right)$, this implies that the measurement (given this measurement outcome) commutes with the product of vertex stabilizers of $\ket{G}$.

\section{Proofs of Expressivity Separations}\label{sec:proof_of_express_sep}

\subsection{\texorpdfstring{$\left(\ell,n,k\right)$}{(l,n,k)}-Hypergraph Stabilizer Measurement Translation}\label{sec:task_desc}

Before giving proofs of expressivity separations between our quantum model and classical models, we first give a formal definition of the translation task we will prove a separation on: namely, \emph{$\left(\ell,n,k\right)$-hypergraph stabilizer measurement translation} ($\left(\ell,n,k\right)$-HSMT), parameterized by $\ell$, $n$, and $k$, where $\ell\geq\binom{n}{k}+3n-k$. Note that the technical description of the task described here differs from construction presented in the main text, mainly in that there are additional constraints on the forms the various input tokens $\bm{x_i}$ can take. The fact that the $k$-HRNN described in Sec.~\ref{sec:hrnn} (or a slight variant thereof) can perfectly perform the task using $\operatorname{O}\left(n\right)$ qubits or qumodes will follow immediately. In what follows, we use the notation $\binom{\left[n\right]}{r}$ to denote the set of $r$-tuples with distinct elements all drawn from $\left\{1,\ldots,n\right\}$.

We consider an input language given by $\ell$-long sequences of tokens:
\begin{equation}
    \bm{x}=\left(\bm{x}^{\left(1\right)},\ldots,\bm{x}^{\left(\ell\right)}\right)^\intercal.
\end{equation}
Recall from the main text that a qubit $k$-HRNN has initial latent state $\ket{\lambda_1}=\ket{0}^{\otimes n}$, and a qumode $k$-HRNN has initial latent state $\ket{\lambda_1}=\ket{\bm{0}}_{\hat{q}}$ (i.e., it is initially squeezed in the position basis). Just as in the main text, tokens in the $\left(\ell,n,k\right)$-HSMT task will be of the form $\left(\bm{\alpha},\bm{\beta},\gamma,\bm{\phi},\bm{\theta}\right)$, describing parameter settings of a $k$-HRNN. However, to aid in our proofs, we assume a variant of the translation task with certain additional constraints on the tokens.

We assume that the first $n$ tokens are of the form:
\begin{equation}
    \left(\bm{\alpha},\bm{\beta},\gamma,\bm{\phi},\bm{\theta}^{\left(i\right)}\right)=\left(0,i,0,-2\upsilon_i,0\right);
\end{equation}
i.e., it corresponds to a $k$-HRNN measuring either $\exp\left(2\ci\upsilon_i X_i\right)$ (in the qubit setting) or $\exp\left(2\ci\upsilon_i\hat{p}_i\right)$ (in the qumode setting) via phase estimation. In the qumode setting, when $\upsilon_i=1$ and the measurement outcomes are $+1$, this sequence of measurements projects onto the GKP state described in Eq.~\eqref{eq:stand_gkp_state}; this can be seen by considering the phase estimation circuit~\cite{PhysRevA.93.052304} measuring $O_{\theta_{i,p}}$ in the position basis, and considering its action on the state $\ket{0}_{\hat{q}_i}$.

The next $\binom{n}{k}$ tokens---labeled by $k$-tuples $\overline{v}\in\binom{\left[n\right]}{k}$ in an arbitrary order---are of the form:
\begin{equation}
    \left(\bm{\alpha},\bm{\beta},\gamma,\bm{\phi},\bm{\theta}^{\left(i\right)}\right)=\left(\overline{v},0,-\gamma_{\overline{v}},0,0\right),
\end{equation}
i.e., it corresponds to a $k$-HRNN applying $\operatorname{C}^{k-1}Z_{\overline{v}}\left(\gamma_{\overline{v}}\right)$ sequentially.

The next $n$ tokens query the model to reproduce the first $n$ measurement results (or are excluded, if we are studying hypergraph stabilizer measurement \emph{without repetition}). This is essentially just a technical assumption that will allow us to assume any states indistinguishable by a classical model at this point in the sequence have identical outputs. This can be captured by a slightly-modified $k$-HRNN architecture which explicitly stores its first $n$ measurement results in an extra $n$ qumodes of memory.

The next $n-k$ tokens are of the form:
\begin{equation}
    \left(\bm{\alpha},\bm{\beta},\gamma,\bm{\phi},\bm{\theta}\right)=\left(0,b_i,0,0,\theta_{\left\{b_i\right\}}=\epsilon\right)
\end{equation}
for distinct integer $b_i$ and $\epsilon=1$ (in the qubit setting) or $\epsilon\to 0$ (in the qumode setting). That is, it corresponds to a $k$-HRNN measuring $Z_{b_i}$ (in the qubit setting) or $\hat{q}_{b_i}$ (in the qumode setting) via phase estimation.\footnote{In practice, one would change the ancillary register $\ket{a}$ from $\ket{\text{GKP}}$ to $\ket{0}_{\hat{p}}$ in performing phase estimation at this stage in the qumode setting; this only slightly modifies the $k$-HRNN architecture.}

Finally, all remaining tokens are of the form:
\begin{equation}
    \left(\bm{\alpha},\bm{\beta},\gamma,\bm{\phi},\bm{\theta}\right)=\left(0,\overline{c},0,-\bm{\phi},-\bm{\theta}\right)
\end{equation}
where $\overline{c}$ is the remaining $k$ qumodes that were not one of the $b_i$, i.e., it corresponds to a general $k$-HRNN measuring general $G_\cdot$ on the $k$ qumodes previously unmeasured.

By construction, all input sequences $\bm{x}$ are then described by $\bm{\upsilon}\in\mathbb{R}^n$, $\bm{\gamma}\in\mathbb{R}^{\binom{n}{k}}$, $\overline{b}\in\binom{\left[n\right]}{n-k}$, and $\left(\bm{\phi},\bm{\theta}\right)$ describe the remainder of the sequence. A translation $\bm{y}$ of $\bm{x}$ is considered correct if it is of the form:
\begin{equation}
    \bm{y}=\left(m_1,\ldots,m_n,0,\ldots,0,m_1,\ldots,m_n,m_{\ell-3n+1},\ldots,m_\ell\right)^\intercal,
\end{equation}
where the measurement outcomes $m_i$ are consistent with those of a $k$-HRNN performing the previously described procedure.

\subsection{Single-copy Antidistinguishing Measurement Sequences via Quantum Contextuality}\label{sec:dist_meas_seqs}

We now prove two lemmas demonstrating the existence of \emph{antidistinguishing measurement sequences} given by qubit $k$-uniform weighted hypergraph state stabilizers, as well as qumode $k$-uniform hypergraph state stabilizers. Antidistinguishing measurement sequences are those which, given $r$ candidates for a given quantum state, rule out one candidate with certainty~\cite{Quanta22}. Their existence is equivalent to the presence of quantum contextuality in a set of quantum states~\cite{srikumar2024contextuality}.

We first prove that antidistinguishing measurement sequences via the measurement of quantum contextual observables between qubit $k$-uniform hypergraph states exist.
\begin{lemma}[Qubit $k$-uniform hypergraph states are antidistinguishable]
    Consider the state $\ket{0}^{\otimes n}$. Let $\ket{\psi_1}$ and $\ket{\psi_2}$ be two qubit $k$-uniform weighted hypergraph states with some hyperedge weight differing by $1$. There exists an antidistinguishing measurement sequence (via measuring quantum contextual observables of the form of $G_\cdot$) of length $n-k+2$ that with certainty discounts one of the three states.
    \label{lemma:hypergraph_state_context}
\end{lemma}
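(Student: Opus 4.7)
The plan is to split the sequence of $n-k+2$ measurements into two stages that together implement a generalized Mermin--Peres argument analogous to the $k=2$ example of Sec.~\ref{sec:main_result_main_text}. Let $\overline{v}$ denote the $k$-hyperedge on which the weights of $\ket{\psi_1}$ and $\ket{\psi_2}$ differ by $1$; by $2$-periodicity of $U_{\overline{v}}(\cdot)$ I may assume $\ket{\psi_2}=U_{\overline{v}}(1)\ket{\psi_1}$ with the $\overline{v}$-weight of $\ket{\psi_1}$ equal to zero.

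For the first $n-k$ measurements I choose the single-qubit observables $Z_j$ for each $j\in[n]\setminus\overline{v}$, each realizable as phase estimation of a $G_{\cdot}$ of the singleton form used in Appendix~\ref{sec:task_desc}. Since $\ket{0}^{\otimes n}$ is a $+1$ eigenstate of every $Z_j$, any outcome $-1$ immediately antidistinguishes it. When all outcomes are $+1$, the identity $(I-Z_j)(I+Z_j)=0$ inside the exponent defining each $U_{\overline{u}}$ with $\overline{u}\ni j$ shows that $U_{\overline{u}}(e_{\overline{u}})$ acts as the identity on the $Z_j=+1$ projector. Because no proper subset of $\overline{v}$ can carry a $k$-hyperedge in a $k$-uniform state, the post-measurement state collapses on the $k$ qubits of $\overline{v}$ to $U_{\overline{v}}(e_{\overline{v}})\ket{+}^{\otimes k}$, reducing the problem to antidistinguishing the three $k$-qubit states $\ket{0}^{\otimes k}$, $\ket{+}^{\otimes k}$, and $U_{\overline{v}}(1)\ket{+}^{\otimes k}$ with the remaining two measurements.

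For the final two measurements I use $M_1:=\exp(\ci\cpi(R-1))$ and $M_2:=\prod_{i\in\overline{v}}X_i$, where $R$ is the integer-coefficient degree-$k$ multilinear polynomial in $(I-Z_i)/2$ of Appendix~\ref{sec:qubit_hypergraph} satisfying $s_{\overline{v}}=M_2\exp(\ci\cpi R)$ for the weight-$1$ single-hyperedge hypergraph state on $\overline{v}$. Both are measured via phase estimation of operators in the $G_{\overline{v}}$ family: $M_1$ as $G_{\overline{v}}(\bm{0},\bm{\theta})$ with $\theta_{\overline{w}}=-\cpi(R-1)_{\overline{w}}$ for each nonempty $\overline{w}\subseteq\overline{v}$, and $M_2$, up to a global phase, as $G_{\overline{v}}((\cpi/2)\bm{1},\bm{0})=(-\ci)^k M_2$. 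By Eq.~\eqref{eq:qubit_meas_commute} applied with $e_{\overline{v}}=1$, these two commute, so the joint outcome $(m_1,m_2)\in\{\pm 1\}^2$ is well defined. The constant-term identity $R_0=1$ yields $m_1=+1$ on $\ket{0}^{\otimes k}$; the identity $X_i\ket{+}=\ket{+}$ yields $m_2=+1$ on $\ket{+}^{\otimes k}$; and the relation $s_{\overline{v}}=-M_2M_1$ yields $m_1m_2=-1$ on $U_{\overline{v}}(1)\ket{+}^{\otimes k}$. A four-case check over $(m_1,m_2)$ then shows that each outcome rules out at least one of the three candidates.

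The hard part is identifying the right pair of contextual observables for general $k$: the choice $M_1=\exp(\ci\cpi(R-1))$ and $M_2=\prod X_i$ is the natural generalization of the $Z_1Z_2$-and-$X_1X_2$ Mermin--Peres pair used for $k=2$, and its contextual behavior is witnessed precisely by the constant-term condition $R_0=1$ together with the stabilizer identity $s_{\overline{v}}=-M_2M_1$. Both facts come from the algebra already developed in Appendix~\ref{sec:qubit_hypergraph}, so once these observables are in hand the remaining work is careful bookkeeping with Eqs.~\eqref{eq:hypergraph_stab_comm_2} and~\eqref{eq:qubit_meas_commute} rather than any new algebra.
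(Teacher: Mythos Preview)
Your proposal is correct and follows essentially the same route as the paper: measure $Z_j$ on the $n-k$ qubits outside $\overline{v}$ to reduce to three $k$-qubit states, then use the commuting pair $M_1=\exp(\ci\cpi(R-1))=\cos(\cpi(R-1))$ and $M_2=\prod_{i\in\overline{v}}X_i$ to antidistinguish via the Mermin--Peres-type identity $s_{\overline{v}}=-M_2M_1$. The one point to tighten is your WLOG that the $\overline{v}$-weight of $\ket{\psi_1}$ vanishes: $2$-periodicity alone does not give this---you need the diagonal basis change $U_{\overline{v}}(-e_{\overline{v}}')$, which fixes $\ket{0}^{\otimes n}$ and keeps the measured operators in the $G_\cdot$ family, exactly as the paper does.
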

\begin{proof}
    Let $\overline{v}$ be the set of vertices associated with a hyperedge where $\ket{\psi_1}$ and $\ket{\psi_2}$ have weights $e_{\overline{v}}',e_{\overline{v}}''$, respectively, differing by $1$. For simplicity, we consider the basis transformed by powers of $\operatorname{C}^k Z$ operations such that $e_{\overline{v}}'=0$ and $e_{\overline{v}}''=1$; note that this can be done without loss of generality (WLOG) as this basis transformation maps the class of operators comprising the qubit $k$-HRNN to itself, as well as maps $\ket{0}^{\otimes n}$ to itself.

    The antidistinguishing measurement sequence first consists of measuring $Z_i$ for all qubits labeled by $i$ associated with the $n-k$ vertices not in $\overline{v}$; if one of these measurement results is not $1$, then $\ket{0}^{\otimes n}$ is discounted (i.e., the state in question is known not to be $\ket{0}^{\otimes n}$). We assume then that they are all $1$. For all $i\in\overline{v}$, we have by definition that the post-measurement state of $\ket{\psi_1}$ is then just:
    \begin{equation}
        \ket{+}^{\otimes n},
    \end{equation}
    and the post-measurement state of $\ket{\psi_2}$ just the unweighted qubit graph state with stabilizers:
    \begin{equation}
        s_i=X_i\operatorname{C}^kZ_{\overline{v}\setminus\left\{i\right\}}.
    \end{equation}
    Consider now the measurement of the operator:
    \begin{equation}
        M=\cos\left(\cpi\left(R-1\right)\right),
    \end{equation}
    where $R$ is defined as in Eq.~\eqref{eq:R_def}. As $\exp\left(\ci\cpi\left(R-1\right)\right)$ is a product of multi-controlled $Z$ operators, and as multi-controlled $Z$ operators are Hermitian, $M$ can be written as a product of multi-controlled $Z$ operators.

    By the discussion surrounding Eq.~\eqref{eq:R_def}, $R-1$ nullifies $\ket{0}^{\otimes n}$. Thus, the measurement result of $M$ must be $+1$; otherwise, $\ket{0}^{\otimes n}$ would be distinguished. Furthermore, from Eq.~\eqref{eq:qubit_meas_commute}, this measurement commutes with the product of vertex stabilizers for $\ket{\psi_1},\ket{\psi_2}$. Note also that the projection onto the subspace where $M=1$ is equal to the projection onto the subspace where $\exp\left(\ci\cpi\left(R-1\right)\right)=1$. Thus, the post-measurement states $\ket{\phi_1},\ket{\phi_2}$ of $\ket{\psi_1},\ket{\psi_2}$, respectively, then are respectively stabilized by:
    \begin{align}
        t'&=\prod\limits_{i\in\overline{v}}X_i,\\
        t''&=-\prod\limits_{i\in\overline{v}}X_i.
    \end{align}
    $\ket{\phi_1}$ and $\ket{\phi_2}$ are thus orthogonal with distinguishing measurement given by $t'$.
\end{proof}

The antidistinguishing measurement sequence for qumodes is similar.
\begin{lemma}[Qumode $k$-uniform hypergraph states are antidistinguishable]
    Consider the state $\ket{\bm{0}}_{\bm{\hat{q}}}$. Let $\ket{\psi_1}$ and $\ket{\psi_2}$ be two qumode $k$-uniform weighted hypergraph states with some hyperedge weight differing by $1$. There exists an antidistinguishing measurement sequence (via measuring quantum contextual observables of the form of $G_\cdot$) of length $n-k+2$ that with certainty discounts one of the three states.
    \label{lemma:qumode_hypergraph_state_context}
\end{lemma}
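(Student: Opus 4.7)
The plan is to follow the structure of the proof of Lemma~\ref{lemma:hypergraph_state_context} very closely, substituting the qubit hypergraph stabilizer identities from Appendix~\ref{sec:qubit_hypergraph} with their qumode counterparts from Appendix~\ref{sec:qumode_hypergraph}. Let $\overline{v}$ be a hyperedge on which $\ket{\psi_1}$ and $\ket{\psi_2}$ carry weights $e_{\overline{v}}',e_{\overline{v}}''$ differing by $1$. Conjugating all three candidate states by appropriate powers of $\exp\left(\ci\cpi a\prod_{j\in\overline{v}}\hat{q}_j\right)$ lets me assume without loss of generality that $e_{\overline{v}}'=0$ and $e_{\overline{v}}''=1$, since such a basis change fixes $\ket{\bm{0}}_{\bm{\hat{q}}}$ (as $\hat{q}_j\ket{\bm{0}}_{\bm{\hat{q}}}=0$) and maps the class of $G_\cdot$-type operations to itself.

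The first $n-k$ measurements in the sequence should then be $\hat{q}_i$ for each $i\notin\overline{v}$, implemented as $G_{\{i\}}$-type phase estimation of $\exp\left(-\ci\epsilon\hat{q}_i\right)$ in the $\epsilon\to 0$ limit. If any outcome is nonzero, $\ket{\bm{0}}_{\bm{\hat{q}}}$ is antidistinguished and I stop; otherwise I condition on all outcomes being zero. On the $k$ qumodes indexed by $\overline{v}$ the post-measurement states become $\ket{\text{GKP}}^{\otimes k}$ and $U_{\overline{v}}\!\left(1\right)\ket{\text{GKP}}^{\otimes k}$, respectively, while $\ket{\bm{0}}_{\bm{\hat{q}}}$ already lies in the zero-position eigenspace of every measured operator.

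The core step is to measure $\exp\left(\ci\cpi\left(R-1\right)\right)$, where $R$ is the multilinear polynomial in the $\hat{q}_j$ defined by Eq.~\eqref{eq:qumode_R_def} for $e_{\overline{v}}''=1$; this is again a $G_\cdot$-type measurement. By the discussion following Eq.~\eqref{eq:qumode_R_def} the eigenvalue on $\ket{\bm{0}}_{\bm{\hat{q}}}$ is $1$, so an outcome other than $1$ antidistinguishes $\ket{\bm{0}}_{\bm{\hat{q}}}$; otherwise I condition on outcome $1$. Using the key qumode commutation identity Eq.~\eqref{eq:qumode_meas_commute}, this conditioning commutes with the vertex-stabilizer product $\prod_{i\in\overline{v}}X_i$ of $\ket{\psi_2}$ up to stabilizers. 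The two resulting post-measurement states are then stabilized by $+\prod_{i\in\overline{v}}X_i$ and $-\prod_{i\in\overline{v}}X_i$, respectively, so a final $G_\cdot$-type measurement of $\prod_{i\in\overline{v}}X_i$ antidistinguishes one of them with certainty. The total length is $\left(n-k\right)+1+1=n-k+2$, as required.

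The main obstacle is technical rather than conceptual: qumode operators are unbounded and GKP states are non-normalizable, so the projection implemented by phase estimation of $\exp\left(\ci\cpi\left(R-1\right)\right)$ and its commutation with $\prod_{i\in\overline{v}}X_i$ must be interpreted on the relevant stabilizer subspace rather than globally on Hilbert space. The qubit proof sidestepped this by passing through the bounded Hermitian observable $\cos\left(\cpi\left(R-1\right)\right)$; in the qumode setting the corresponding equivalence is exactly the ``equal up to stabilizers'' phenomenon that makes Eq.~\eqref{eq:qumode_meas_commute} hold only on the hypergraph state, so care must be taken that every step in the proof is applied to vectors already in the relevant stabilized subspace. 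Once this bookkeeping is done, the argument proceeds exactly as in the qubit case.
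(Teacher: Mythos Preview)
Your overall strategy mirrors the paper's, but the two proofs diverge at exactly one point: your WLOG reduction to $e_{\overline{v}}'=0$, $e_{\overline{v}}''=1$. The paper explicitly \emph{declines} this reduction in the qumode case, writing that ``we cannot assume WLOG anymore that one is $0$ and the other $1$, as the class of operators $\exp\left(2\ci\phi\hat{p}_i\right)$ no longer maps to itself under conjugation by the required basis transformation.'' Your assertion that conjugation by $U_{\overline{v}}\left(-e_{\overline{v}}'\right)$ ``maps the class of $G_\cdot$-type operations to itself'' is therefore exactly the claim the paper disputes, and you have not justified it.

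That said, for the \emph{specific} measurements you actually use the claim can be checked: the $\hat{q}_i$ measurements and $\exp\left(\ci\cpi\left(R-1\right)\right)$ are $\hat{q}$-diagonal and hence fixed under conjugation by $U_{\overline{v}}$, while $U_{\overline{v}}\left(e'\right)\left(\prod_{i\in\overline{v}}X_i\right)U_{\overline{v}}\left(-e'\right)=\left(\prod_{i\in\overline{v}}X_i\right)\exp\left(\ci\cpi e'R\right)$, which is again of $G_\cdot$ form. Undoing your reduction thus reproduces precisely the paper's final distinguishing measurement $t'=s'$. So your approach is ultimately correct and equivalent to the paper's once this verification is supplied---but as written, the WLOG step is an unproved assertion that the paper itself flags as problematic. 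The paper's route simply keeps $e_{\overline{v}}',e_{\overline{v}}''$ general throughout and measures the full stabilizer $s'=\left(\prod_{i\in\overline{v}}X_i\right)\exp\left(\ci\cpi e_{\overline{v}}'R\right)$ directly, trading slightly heavier notation for not having to argue about closure of the measurement class under basis change. In either framing, the core mechanism---Eq.~\eqref{eq:qumode_meas_commute} giving commutation only up to stabilizers of $\ket{G}$, forcing the post-measurement stabilizers to differ by the sign $\exp\left(\ci\cpi\left(e_{\overline{v}}''-e_{\overline{v}}'\right)\right)=-1$---is identical.
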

\begin{proof}
    Let $\overline{v}$ be the set of vertices associated with a hyperedge where $\ket{\psi_1}$ and $\ket{\psi_2}$ have weights $e_{\overline{v}}',e_{\overline{v}}''$, respectively, differing by $1$. We assume WLOG that $e_{\overline{v}}''<e_{\overline{v}}'$; note that we cannot assume WLOG anymore that one is $0$ and the other $1$, as the class of operators $\exp\left(2\ci\phi\hat{p}_i\right)$ no longer maps to itself under conjugation by the required basis transformation.

    The antidistinguishing measurement sequence first consists of measuring $\hat{q}_i$ for all qubits labeled by $i$ associated with the $n-k$ vertices not in $\overline{v}$; if one of these measurement results is not $0$, then $\ket{\bm{0}}_{\bm{\hat{q}}}$ is discounted (i.e., the state in question is known not to be $\ket{\bm{0}}_{\bm{\hat{q}}}$). We assume then that they are all $0$. For all $i\in\overline{v}$, we have by definition that the post-measurement state of $\ket{\psi_1}$ is then just the $k$-uniform weighted qumode graph state with stabilizers::
    \begin{equation}
        s_i'=X_i\exp\left(\ci\cpi e_{\overline{v}}'\prod\limits_{j\in\overline{v}}\hat{q}_j\right),
    \end{equation}
    and the post-measurement state of $\ket{\psi_2}$ just the unweighted qumode graph state with stabilizers:
    \begin{equation}
        s_i''=X_i\exp\left(\ci\cpi e_{\overline{v}}''\prod\limits_{j\in\overline{v}}\hat{q}_j\right).
    \end{equation}
    Consider now the measurement of the operator:
    \begin{equation}
        M=\exp\left(\ci\cpi\left(R-1\right)\right),
    \end{equation}
    where $R$ is defined as in Eq.~\eqref{eq:qumode_R_def}.

    By the discussion surrounding Eq.~\eqref{eq:R_def}, $\exp\left(\ci\cpi\left(R-1\right)\right)$ stabilizes $\ket{\bm{0}}_{\bm{\hat{q}}}$. Thus, the measurement result of $M$ must be $+1$; otherwise, $\ket{\bm{0}}_{\bm{\hat{q}}}$ would be distinguished. Furthermore, from Eq.~\eqref{eq:qumode_meas_commute}, this measurement commutes with the product of vertex stabilizers for $\ket{\psi_1},\ket{\psi_2}$, up to stabilizers of $\ket{\psi_1},\ket{\psi_2}$. Thus, the post-measurement states $\ket{\phi_1},\ket{\phi_2}$ of $\ket{\psi_1},\ket{\psi_2}$, respectively, then are respectively stabilized by:
    \begin{align}
        t'&=s',\\
        t''&=\exp\left(\ci\cpi\left(e_{\overline{v}}''-e_{\overline{v}}'\right)\right)s'=-s'.
    \end{align}
    $\ket{\phi_1}$ and $\ket{\phi_2}$ are thus orthogonal with distinguishing measurement given by $t'$.
\end{proof}

\subsection{Expressivity Separation for Autoregressive Models}\label{sec:online_sep}

We consider now an autoregressive learner with structure given by Fig.~\ref{fig:classical_models}(a). We assume for simplicity that the learner is deterministic after given a random vector $\bm{r}$. This class of models includes implementations of stochastic simulation algorithms such as Wigner function simulation~\cite{PhysRevLett.109.230503} as well as generative adversarial networks (GANs)~\cite{NIPS2014_5ca3e9b1}. As for each $\bm{r}$, we will demonstrate that there exists an input sequence such that any classical model with $\dim\left(L\right)<\binom{n}{k}-1$ deterministically outputs a measurement sequence inconsistent with quantum mechanics, our results will still hold when considering the model over its randomness. Due to this we will often take the $\bm{r}$-dependence to be implicit.

Autoregressive neural sequence models at time step $i$ map a $d$-dimensional input token $\bm{x_i}$ and a latent vector $\bm{\lambda_{i-1}}$ to an output token $y_i$ and a new latent vector $\bm{\lambda_i}$. After $\ell$ steps, then, we can consider the autoregressive model as a function:
\begin{equation}
    \mathcal{F}^{\bm{r}}:\left(\mathbb{R}^d\right)^\ell\to L\times\mathbb{R}^\ell
    \label{eq:online_map}
\end{equation}
where $\bm{r}$ is a random vector such that, for any $\bm{r}$, $\mathcal{F}^{\bm{r}}$ is deterministic. In the following we will assume that $L$ is a $C^2$ contractible Finsler manifold; concrete examples of such a manifold are the open ball $L=\left(0,1\right)^d$ and the reals $L=\mathbb{R}^D$ for some $D$.

We now state our conditions on $\mathcal{F}^{\bm{r}}$, stated informally in the main text as Assumptions~\ref{ass:c2} and~\ref{ass:strong_sub}. Our proof of Theorem~\ref{thm:online_sep_qubit} concern a subspace $K$ of inputs corresponding to measurements of hypergraph state stabilizers. We require that the projection of $\mathcal{F}^{\bm{r}}$ onto $L$ is $C^2$ on a contractible, $\left(\dim\left(L\right)+2\right)$-dimensional, open subset $U\subseteq K$. As $\dim\left(K\right)=\binom{n}{k}$, our assumption on the dimension of $U$ is satisfied when $U$ is not measure zero in $K$ and $L<\binom{n}{k}-1$, as stated in the main text. We also require that on $U$, $\mathcal{F}^{\bm{r}}$ is a \emph{strong submersion}~\cite{rabier1997ehresmann}---namely, $U$ is such that the Jacobian of $\mathcal{F}^{\bm{r}}$ has a $\dim\left(L\right)\times\dim\left(L\right)$ minor with lower-bounded determinant. This is effectively a requirement that the model is \emph{strongly Morse}~\cite{pmlr-v89-mokhtari19a} in a multivariate sense. Neural networks being strongly Morse (in the univariate case) is a common assumption~\cite{64871590-990e-34a9-8236-2954b5e72da7,pmlr-v89-mokhtari19a,yang2021,dixit2023accelerated}, and certain classes of neural networks are known to be almost surely Morse over settings of their parameters~\cite{kurochkin2021neural}. Finally, we require as a technical assumption that no sequence with limit point in $\partial U\not\subset U$ maps to a sequence with limit point in $L$ under $\mathcal{F}^{\bm{r}}$ (projected onto $L$). By the ``global implicit function theorem'' of Ref.~\cite{rabier1997ehresmann}, $\mathcal{F}^{\bm{r}}$ (projected onto $L$) is then ``globally'' (on $U$) a trivial bundle with base space $C^1$-diffeomorphic to $L$. This, along with Lemmas~\ref{lemma:hypergraph_state_context} and~\ref{lemma:qumode_hypergraph_state_context}, are the central implications which we will use to prove our theorems.

We give three simple examples that immediately imply our strong submersion condition on (an assumed $C^2$) $\mathcal{F}^{\bm{r}}$:
\begin{enumerate}
    \item The presence of an appropriate $\ell_2$ regularization term dominating at infinity.
    \item More generally, assuming that updates to the model during training (via, for instance, gradient descent) are bounded, as our conditions are implied by regularity of the model at infinity. A similar approach was taken in the work of Ref.~\cite{shi2020learning} in justifying certain regularity conditions of machine learning models at infinity.
    \item The model is implemented at a finite precision $\epsilon\ll 1$. A critical value, by definition, has at least one local coordinate that varies only by $\sim\epsilon^2\approx 0$ for an $\epsilon$-perturbation of a critical point. This locally-constant (at precision $\epsilon$) set of local coordinates can be projected out, then, yielding an equivalent model with maximal-rank Jacobian in the neighborhood of this point (with lower-bounded associated minor determinant), with effective latent space $\tilde{L}$ of lower dimension than $L$.
\end{enumerate}

We give two simple examples that imply our condition on certain limit points not existing in $L$:
\begin{enumerate}
    \item $U$ does not contain a boundary at all, e.g., a sufficiently strong regularity condition is imposed on $\mathcal{F}^{\bm{r}}$ such that it is a strong submersion on an unbounded subspace. This is similar to the first two examples previously mentioned.
    \item $\mathcal{F}^{\bm{r}}$ is a proper open map (onto its image) on $U\cup\partial U$, such as if it is composed of linear transformations and leaky rectified linear units. This is as under such a map, $\partial U$ maps to $\partial\left(\mathcal{F}^{\bm{r}}\left(U\right)\right)$, which has no intersection with $\mathcal{F}^{\bm{r}}\left(U\right)$ as $\mathcal{F}^{\bm{r}}$ is open. To see this, assume some point $d\in\partial U$ mapped to $y\in\mathcal{F}^{\bm{r}}\left(U\right)$. By the definition of this set, there must also be some $u\in U$ mapping to $y$ under $\mathcal{F}^{\bm{r}}\left(U\right)$. As $\mathcal{F}^{\bm{r}}\left(U\right)$ is proper, there must be a neighborhood of $y$ with preimage (intersected with $U$) compactly contained in $U$. However, the image of any Cauchy sequence $\left(x_i\right)$ in $U$ approaching $d\not\in U$ must be in this neighborhood of $y$ for sufficiently large $i$ by continuity, violating compactness and yielding a contradiction.
\end{enumerate}

Finally, we define a helper bump function that we use in our proof. Consider the smooth transition function:
\begin{equation}
    g\left(x\right):=\left(\frac{\exp\left(-x^{-1}\right)}{\exp\left(-x^{-1}\right)+\exp\left(-\left(1-x\right)^{-1}\right)}\right).
\end{equation}
The function
\begin{equation}
    h_{a,b,c,d}\left(x\right):=g\left(\frac{x-a}{b-a}\right)g\left(\frac{d-x}{d-c}\right)
\end{equation}
is smooth, equal to $1$ on the interval $\left[b,c\right]$, and vanishes outside of the interval $\left(a,d\right)$. The periodic bump function (the ``bubble wrap'' function):
\begin{equation}
    t\left(x\right):=\sum\limits_{i=-\infty}^\infty h_{2\cpi i,\frac{\cpi}{2}+2\cpi i,\cpi+2\cpi i,\frac{3\cpi}{2}+2\cpi i}\left(x\right)\label{eq:bubblewrap_function}
\end{equation}
is then a smooth function equal to $1$ when $x\equiv z\pmod{2\cpi}$ for all $\frac{\cpi}{2}\leq z\leq\cpi$, and equal to $0$ when $x\equiv z\pmod{2\cpi}$ for all $\frac{3\cpi}{2}\leq z\leq 2\cpi$.

With the preliminaries in place, we now prove our expressivity separation on the $\left(\ell,n,k\right)$-HSMT task.
\begin{theorem}[Autoregressive hypergraph stabilizer measurement translation memory lower bound]
    Consider an autoregressive model with $C^2$ contractible Finsler latent manifold $L$, and model function after $m=\binom{n}{k}+n$ tokens:\footnote{Recall from Appendix~\ref{sec:task_desc} that the first $\binom{n}{k}+n$ tokens are uniquely described by $\left(\bm{\upsilon},\bm{\gamma}\right)\in\mathbb{R}^m$.}
    \begin{equation}
        \mathcal{F}:\mathbb{R}^m\to L\times\mathbb{R}^n.
    \end{equation}
    Consider as well $K\subset\mathbb{R}^m$, the space of $\bm{Q}=\left(\bm{\upsilon},\bm{\gamma}\right)$ where all $\upsilon_i=t\left(\left\lVert\bm{\gamma}\right\rVert_2^2\right)$, with $t$ defined as in Eq.~\eqref{eq:bubblewrap_function}.

    Assume that the projection of $\mathcal{F}\left(U\right)$ onto $L$ is $C^2$ for some contractible, open subset $U\subseteq K$, where $U$ is a $C^1$ manifold of dimension at least $\dim\left(L\right)+2$. Assume also that $\mathcal{F}$ is a strong submersion on $U$, and that no sequence with limit point in $\partial U\not\subset U$ maps to a sequence with limit point in $L$ under $\mathcal{F}$ projected onto $L$.

    This model cannot achieve a finite backward empirical cross entropy on the $\left(\binom{n}{k}+3n-k+2,n,k\right)$-hypergraph stabilizer measurement translation task due to quantum contextuality.
    \label{thm:online_sep_qubit}
\end{theorem}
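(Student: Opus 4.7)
The plan is to reach a contradiction by pairing Rabier's global implicit function theorem~\cite{rabier1997ehresmann} with the antidistinguishability supplied by Lemma~\ref{lemma:qumode_hypergraph_state_context}. First, because the projection of $\mathcal{F}$ onto $L$ is $C^2$ on the contractible open set $U$, is a strong submersion there, and no sequence on $\partial U \setminus U$ maps to a sequence with limit in $L$, Rabier's theorem identifies this projection, globally on $U$, as a trivial $C^1$ fiber bundle with base $C^1$-diffeomorphic to $L$ and fibers of dimension $\dim(U)-\dim(L)\geq 2$. In particular, every fiber is a connected $C^1$-submanifold of $U\subseteq K$ of dimension at least two.

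Next, I would exploit the bubble-wrap coupling $\upsilon_i = t(\|\bm{\gamma}\|_2^2)$ to select three points $\bm{Q}_1,\bm{Q}_2,\bm{Q}_3$ in a common fiber whose quantum semantics are exactly those of Lemma~\ref{lemma:qumode_hypergraph_state_context}. Since $t$ alternates between $0$ and $1$ on interleaved open annuli of $\|\bm{\gamma}\|_2^2$ and the no-limit-point hypothesis prevents fibers from terminating at $\partial U$, a continuity-plus-triviality argument should show that $\|\bm{\gamma}\|_2^2$ restricted to a generic fiber crosses both the ``$0$'' and ``$1$'' regions. I would pick $\bm{Q}_1$ in a ``$0$'' annulus: there $\bm{\upsilon}=\bm{0}$, so the first $n$ phase-estimation tokens act trivially and, because $\hat{q}_j$ annihilates $\ket{\bm{0}}_{\bm{\hat{q}}}$, the subsequent $\binom{n}{k}$ hypergraph tokens also act as the identity, leaving the post-prefix state equal to $\ket{\bm{0}}_{\bm{\hat{q}}}$ regardless of $\bm{\gamma}$. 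I would then select $\bm{Q}_2,\bm{Q}_3$ in a common ``$1$'' annulus of the same fiber, where $\bm{\upsilon}=\bm{1}$ and conditioning on the $+1$ phase-estimation outcomes projects onto the GKP state, so that the next $\binom{n}{k}$ tokens prepare a qumode weighted hypergraph state with weights $\bm{\gamma}$. The fiber's residual slack inside that ``$1$'' annulus, combined with a transversality argument against the coordinate $\gamma_{\overline{v}}$ for one chosen hyperedge $\overline{v}$, lets me arrange that $\bm{Q}_2$ and $\bm{Q}_3$ prepare hypergraph states differing only in the weight of $\overline{v}$ and by exactly $1$.

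Finally, Lemma~\ref{lemma:qumode_hypergraph_state_context} supplies an antidistinguishing sequence of $n-k+2$ tokens of the allowed form---$n-k$ position measurements on the indices $b_i\notin\overline{v}$ followed by two measurements of $G_\cdot$-class observables---which I append after the shared $n$-token repetition block to produce three full-length inputs $\bm{x},\bm{x'},\bm{x''}$. Under $\mathcal{F}$ they map to a common latent state and the same replay of the first $n$ measurement outcomes, so the classical model's conditional output distribution is identical on all three; however, the three true conditional distributions $p(\bm{y}\mid\bm{x}_j)$ are pairwise antidistinguished on the last $n-k+2$ outputs, so by pigeonhole the classical model must place positive mass on some $\bm{y}$ for which one $p(\bm{y}\mid\bm{x}_j)=0$, driving the backward empirical cross entropy to $+\infty$. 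The main obstacle I anticipate is the middle step: rigorously exhibiting $\bm{Q}_1$, $\bm{Q}_2$, and $\bm{Q}_3$ simultaneously on a single fiber with the required hyperedge-weight difference of exactly one, which is where the periodic design of $t$ and the $\geq 2$ dimensions of fiber slack must be spent carefully, possibly via a second, localized application of the implicit function theorem to the restricted map $\bm{\gamma}\mapsto\gamma_{\overline{v}}$.
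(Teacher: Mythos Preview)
Your skeleton matches the paper's proof: Rabier's theorem gives a trivial $C^1$ bundle $H\times L\cong U$ with $\dim(H)\geq 2$; one then locates three points on a common fiber---one with $\bm{\upsilon}=\bm{0}$ (trivial measurements, post-prefix state $\ket{0}^{\otimes n}$ or $\ket{\bm{0}}_{\bm{\hat{q}}}$) and two describing hypergraph states with some hyperedge weight differing by $1$---and invokes the antidistinguishing Lemmas to force an impossible output. The use of the replay block to pin the first $n$ outcomes to $+1$ across the fiber is also the paper's mechanism.

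Where you flag the main obstacle, the paper resolves it not by transversality or a local implicit-function argument but by \emph{unboundedness}. Since $U$ and $L$ are contractible, so is $H$; since the no-limit-point hypothesis forbids $\eta(H\times\{\lambda\})$ from touching $\partial U$, each fiber is a contractible manifold without boundary, hence non-compact (closed manifolds are never contractible), hence unbounded in $K$. Unboundedness plus path-connectedness then forces $\|\bm{\gamma}\|_2^2$ to sweep an unbounded interval along every fiber---so it hits both $t=0$ and $t=1$ plateaus---and forces some single coordinate $\gamma_{\overline{v}}$ to attain every value in some ray $[\gamma^\ast,\infty)$, from which one reads off two points with $\gamma_{\overline{v}}$ differing by exactly $1$. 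The second fiber dimension is spent only to perturb $\|\bm{\gamma}\|_2^2$ off a $t=0$ plateau at those two points. This is both simpler and more robust than trying to engineer transversality inside a single $t=1$ annulus.

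Two smaller corrections. First, Lemmas~\ref{lemma:hypergraph_state_context} and~\ref{lemma:qumode_hypergraph_state_context} require only that \emph{some} hyperedge weight differ by $1$; your demand that $\bm{Q}_2,\bm{Q}_3$ differ in \emph{only} the $\overline{v}$-weight is an over-constraint, and dropping it is precisely what lets the unboundedness argument go through without any transversality. Second, the theorem covers both variants of the task, so you should invoke Lemma~\ref{lemma:hypergraph_state_context} alongside Lemma~\ref{lemma:qumode_hypergraph_state_context}.
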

\begin{proof}
    Due to the bijection between $\bm{Q}\in K$ and the associated hypergraph adjacency tensor $\bm{A}$ (i.e., $\bm{\gamma}$ is the vectorized $\bm{A}$), we will often use $\bm{A}$ to refer to its associated $\bm{Q}$ by proxy. It is obvious from this construction that $K$ is an $\binom{n}{k}$-dimensional embedding of representations of $k$-uniform hypergraphs via their adjacency tensors. Thus, the states described by points in $K$ are exactly $k$-uniform weighted hypergraph states with, depending on the measurement results, perhaps overall phases on the stabilizers.

    We first show that $\mathcal{F}$ induces a fiber bundle. We use $\varPi_L\circ\mathcal{F}$ to denote the projection of $\mathcal{F}$ onto $L$, and $\left.\varPi_L\circ\mathcal{F}\right|_U$ its restriction to $U$. By Theorem 5.2 of Ref.~\cite{rabier1997ehresmann}, following our assumptions, $\left.\varPi_L\circ\mathcal{F}\right|_U$ is a fiber bundle with base space $C^1$-diffeomorphic to $L$ and induces a $C^1$-diffeomorphism $\eta:H\times L\to U$ such that:
    \begin{equation}
        H\times L\cong U
    \end{equation}
    for some fiber space $H$. In particular, $\left.\varPi_L\circ\mathcal{F}\right|_U$ is a trivial bundle, and $H$ is thus contractible and of dimension:
    \begin{equation}
        \dim\left(H\right)=\dim\left(U\right)-\dim\left(L\right)\geq 2.
    \end{equation}
    We claim that $\eta\left(H\times\left\{\lambda\right\}\right)$ is unbounded for all $\lambda\in L$. To see this, first note that as $\eta$ is a homeomorphism, the boundary of $\eta\left(H\times\left\{\lambda\right\}\right)$ for all $\lambda\in L$ must be a subset of $\partial U$. Assume now that it is nonempty, i.e., that there existed a Cauchy sequence with limit point in $\partial H\times\left\{\lambda\right\}$ that mapped to a Cauchy sequence with limit point in $\partial U$ under $\eta$. Then, a sequence with limit point in $\partial U$ would have base point $\lambda\in L$, violating our assumption that no sequence with limit point in the boundary of $U$ has limit point in $L$ under $\varPi_L\circ\mathcal{F}$. Thus, $\eta\left(H\times\left\{\lambda\right\}\right)$ has no boundary. In particular, as $\eta\left(H\times\left\{\lambda\right\}\right)$ is contractible, it is not compact as closed manifolds cannot be contractible. As $\eta\left(H\times\left\{\lambda\right\}\right)\subset K$ has no boundary it is therefore unbounded.

    We now show that each fiber contains a sequence describing trivial operations in the $\left(\ell,n,k\right)$-HSMT task. To see this, fix some $\lambda\in L$. As $\eta\left(H\times\left\{\lambda\right\}\right)$ is unbounded and path-connected, there exist points with arbitrary $2$-norm modulo $2\cpi$ in $\eta\left(H\times\left\{\lambda\right\}\right)$; namely, there exists a point $\left(\bm{h_0},\lambda\right)\in H\times L$ such that (mapped under $\eta$):
    \begin{equation}
        \left\lVert\bm{\gamma}\right\rVert_2^2\equiv 0\pmod{2\cpi}
    \end{equation}
    and thus all
    \begin{equation}
        \upsilon_i=t\left(\left\lVert\bm{\gamma}\right\rVert_2^2\right)=0.
    \end{equation}
    This corresponds to trivial measurements in the translation task, fixing the measurement outcomes to all be $+1$. Note that $\lambda$ is the image of $\bm{h_0}$ under $\varPi_L\circ\mathcal{F}$. Assuming the model performs the translation task correctly, this fixes the measurement outcomes of the first $n$ measurements to all be $+1$ for \emph{all} points mapping to $\lambda$ under $\varPi_L\circ\mathcal{F}$, as the next $n$ outputs are just recapitulations of these first $n$ measurement outcomes and they must be consistent for all inputs mapping to $\lambda$.

    We now show the existence of other, nontrivial sequences sharing a fiber with $\left(\bm{h_0},\lambda\right)$. Recall that $H$ is unbounded and path-connected, and thus there exists some choice of $\overline{v}\in\mathbb{R}^{\binom{\left[n\right]}{k}}$ and $\gamma^\ast\in\mathbb{R}$ such that, for any $\gamma\in\left[\gamma^\ast,\infty\right)$,\footnote{Or $\gamma\in\left(-\infty,-\gamma^\ast\right]$, though the logic proceeds identically.} there exists some point $\bm{Q}\in K$ with $\gamma_{\overline{v}}=\gamma$ that maps to $\lambda\in L$ under $\varPi_L\circ\mathcal{F}$. Furthermore, as $\dim\left(H\right)\geq 2$, we can assume WLOG that the associated $\bm{\upsilon}\not\equiv 0\pmod{2\cpi}$ by varying the other coordinate of $\eta\left(H\times\left\{\lambda\right\}\right)$. Thus, there exist $\bm{h'},\bm{h''}\in U$ mapping to $\lambda$ that describe $k$-uniform weighted hypergraph states with some respective hyperedges $e_{\overline{v}}',e_{\overline{v}}''$ differing by $1$.

    By Lemma~\ref{lemma:hypergraph_state_context} (in the qubit setting) or Lemma~\ref{lemma:qumode_hypergraph_state_context} (in the qumode setting), we have that there exists a measurement sequence (describable by the translation task) of length $n-k+2$ that antidistinguishes these three associated states with certainty. By then performing this antidistinguishing measurement sequence, the output for one of these $\bm{h_0},\bm{h'},\bm{h''}$ must then be incorrect. Thus, the model must obtain an infinite backward empirical cross entropy on these three sequences when followed by the antidistinguishing measurement sequence.
\end{proof}

\subsection{Expressivity Separation for Encoder-Decoder Models}\label{sec:enc_dec_sep}

Though autoregressive sequence models are perhaps conceptually the simplest as they directly map input tokens to output tokens, in practice encoder-decoder models outperform them~\cite{10.5555/2969033.2969173,10.5555/3295222.3295349}. We now show that no encoder-decoder model (with similar technical assumptions as the autoregressive separation) can perform the introduced tasks to finite backward empirical cross entropy. The proof will be similar to that of Theorem~\ref{thm:online_sep_qubit}; however, as encoder-decoder models can see the entire input sequence at once, we do not directly have the freedom to choose the antidistinguishing measurement sequence as in the proofs of those theorems. This will change the details of the proofs here.

We consider an encoder-decoder model with structure given by Fig.~\ref{fig:classical_models}(b). The encoder of such a model is a function:
\begin{equation}
    \mathcal{E}^{\bm{r}}:\left(\mathbb{R}^d\right)^\ell\to L
    \label{eq:model_map}
\end{equation}
where, as in the proofs of Theorem~\ref{thm:online_sep_qubit}, $\bm{r}$ is a random vector such that, for any $\bm{r}$, $\mathcal{E}^{\bm{r}}$ is deterministic. For the reasons discussed in Appendix~\ref{sec:online_sep} we will take the $\bm{r}$-dependence to be implicit WLOG. We once again assume that $L$ is a $C^2$ contractible Finsler manifold.

As $\mathcal{E}$ is a function of the \emph{entire} input sequence, we have to appropriately modify our assumptions on its derivatives. We once again assume the existence of a contractible, $\left(\dim\left(L\right)+2\right)$-dimensional, open subset $U\subseteq K$, where $K$ once again is a subspace of inputs corresponding to the measurement of hypergraph state stabilizers. To match the full sequence length, though, we now have to assume that for $\mathcal{E}$ on $U\times\left\{\bm{v}\right\}$---not just $U$, as in the autoregressive model separation---for all but isolated $\bm{v}\in\mathbb{R}^{\ell-\dim\left(K\right)}$, $\mathcal{E}$ is a strong submersion, and no sequence with limit point in $\partial U\times\left\{\bm{v}\right\}\not\subset U\times\left\{\bm{v}\right\}$ maps to a sequence with limit point in $L$ under $\mathcal{E}$.

With the preliminaries in place, we now prove our expressivity separation for encoder-decoder models.
\begin{theorem}[Encoder-decoder hypergraph stabilizer measurement translation memory lower bound]
    Consider an encoder-decoder model with $C^2$ contractible Finsler latent manifold $L$, and encoder function
    \begin{equation}
        \mathcal{E}:\mathbb{R}^{m+n+2^k+1}\to L,
    \end{equation}
    where $m=\binom{n}{k}+n$. Let $K\subset\mathbb{R}^m$ be as in Theorem~\ref{thm:online_sep_qubit}. Consider any $V\subseteq\mathbb{R}^{n+2^k+1}$ that equals $\mathbb{R}^{n+2^k+1}$ up to the subtraction of isolated points. Assume there exists a contractible, open subset $U\subset K$, where $U$ is a $C^1$ manifold of dimension at least $\dim\left(L\right)+2$ such that for all $\bm{v}\in V$, on $U\times\left\{\bm{v}\right\}$, $\mathcal{E}$ is $C^2$ and a strong submersion. Finally, assume no sequence with limit point in $\partial U\times\left\{\bm{v}\right\}\not\subset U\times\left\{\bm{v}\right\}$ maps to a sequence with limit point in $L$ under $\mathcal{E}$.

    This model cannot achieve a finite backward empirical cross entropy on the $\left(\binom{n}{k}+2n-k+3,n,k\right)$-hypergraph stabilizer measurement translation (without repetition) task due to quantum contextuality.\label{thm:enc_dec_sep_qubit}
\end{theorem}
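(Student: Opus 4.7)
The plan is to adapt the proof of Theorem~\ref{thm:online_sep_qubit} to the encoder--decoder setting; the essential new difficulty is that the antidistinguishing portion of the input sequence must be committed to \emph{before} applying the fiber-bundle argument, rather than being chosen adaptively after observing which $\bm{\gamma}$-coordinate varies along the fiber.

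First I would fix an arbitrary hyperedge $\overline{v^\ast}\in\binom{[n]}{k}$ and construct $\bm{v^\ast}\in\mathbb{R}^{n+2^k+1}$ encoding the antidistinguishing measurement sequence for $\overline{v^\ast}$ prescribed by Lemmas~\ref{lemma:hypergraph_state_context} and~\ref{lemma:qumode_hypergraph_state_context}: $n-k$ single-qubit (or single-qumode) measurements of $Z_{b_i}$ (or $\hat{q}_{b_i}$) on $b_i\notin\overline{v^\ast}$, followed by the measurement of $M=\cos(\cpi(R-1))$ (or $\exp(\ci\cpi(R-1))$) and of $t'=\prod_{i\in\overline{v^\ast}}X_i$ on $\overline{v^\ast}$, and one additional ``buffer'' token with free parameters $(\bm{\phi},\bm{\theta})$. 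Since $V$ excludes only isolated points, $\bm{v^\ast}$ can be placed in $V$ by an arbitrarily small perturbation of these buffer parameters. I would then apply Theorem 5.2 of Ref.~\cite{rabier1997ehresmann} to $\mathcal{E}|_{U\times\{\bm{v^\ast}\}}$ to obtain a trivial fiber bundle $\eta^{\bm{v^\ast}}:H_{\bm{v^\ast}}\times L\to U$ with $\dim(H_{\bm{v^\ast}})\geq 2$; the boundary hypothesis forces every fiber to be unbounded and path-connected, exactly as in the proof of Theorem~\ref{thm:online_sep_qubit}. Following that proof, I would locate in some fiber a trivial point $\bm{Q_0}$ with $\|\bm{\gamma}\|_2^2\equiv 0\pmod{2\cpi}$ (so $\upsilon_i=0$ and the prepared state is $\ket{0}^{\otimes n}$ or $\ket{\bm{0}}_{\bm{\hat{q}}}$) together with two further points $\bm{Q_1},\bm{Q_2}$ at which $\|\bm{\gamma}\|_2^2\in[\cpi/2,\cpi]\pmod{2\cpi}$ (so $\upsilon_i=1$ and a weighted hypergraph state is prepared) whose $\gamma_{\overline{v^\ast}}$-coordinates differ by exactly $1$---so the hypergraph edge weights $e_{\overline{v^\ast}}$ differ by $1$, activating the antidistinguishing lemma.

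The main obstacle, absent in the autoregressive proof, is guaranteeing that this specific 2D fiber at this pre-committed $\bm{v^\ast}$ actually varies in the $\gamma_{\overline{v^\ast}}$ direction: a priori the fiber tangent at $\bm{Q_0}$ could lie inside the hyperplane $\{\gamma_{\overline{v^\ast}}=\mathrm{const}\}$, but this is a codimension-$\geq 1$ degeneracy in the smoothly $\bm{v^\ast}$-dependent kernel of the Jacobian of $\mathcal{E}(\cdot,\bm{v^\ast})$. The free parameters of the buffer token do not enter the definitions of $M$ or $t'$ on the earlier state evolution, so I would perturb these parameters within $V$ to push $\bm{v^\ast}$ off the degenerate locus and force the fiber tangent to project nontrivially onto $e_{\overline{v^\ast}}$, while still exactly realizing the antidistinguishing structure for $\overline{v^\ast}$. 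Combined with unboundedness, path-connectedness, and the bubble-function ``bubble-wrap'' argument of Theorem~\ref{thm:online_sep_qubit}, this delivers the required $\bm{Q_1},\bm{Q_2}$.

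Finally, since all three inputs $(\bm{Q_i},\bm{v^\ast})$ map to a common $\lambda\in L$ under $\mathcal{E}$, the decoder emits a single distribution $q(\bm{y})$ on all three of them. Lemmas~\ref{lemma:hypergraph_state_context} and~\ref{lemma:qumode_hypergraph_state_context} imply that for every $\bm{y}$ at least one of $p(\bm{y}\mid\bm{Q_i},\bm{v^\ast})$ vanishes---and this extends from the antidistinguishing subsequence of outputs to the full $\bm{y}$ by multiplicativity of the measurement-outcome probabilities. Any $\bm{y}$ with $q(\bm{y})>0$ therefore contributes $-\log 0=+\infty$ to the backward empirical cross entropy against at least one of the three inputs, yielding the contradiction.
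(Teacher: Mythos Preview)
Your overall structure matches the paper's---Rabier's theorem on each slice $U\times\{\bm{v}\}$, contractible unbounded fibers, a trivial point plus two hypergraph-state points with a hyperedge weight differing by $1$, then the antidistinguishing Lemmas. The divergence is exactly at your ``main obstacle,'' and your proposed resolution has a genuine gap. You pre-commit to $\overline{v^\ast}$ and then perturb the buffer-token parameters to force the fiber to vary along $\gamma_{\overline{v^\ast}}$, but: (i) the hypotheses control $\mathcal{E}$ only slicewise on each $U\times\{\bm{v}\}$, with no regularity in $\bm{v}$, so nothing guarantees the kernel of $D_{\bm{Q}}\mathcal{E}(\cdot,\bm{v^\ast})$ moves with the buffer coordinate at all---your ``codimension $\geq 1$'' claim is unsupported; and (ii) even if the fiber tangent at $\bm{Q_0}$ has a nonzero $\gamma_{\overline{v^\ast}}$-component, this does not give \emph{global} unboundedness of the fiber in that coordinate, which is what is actually needed to place $\bm{Q_1},\bm{Q_2}$ on the required $\lVert\bm{\gamma}\rVert_2^2$-shells with $\gamma_{\overline{v^\ast}}$-values differing by $1$.

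The paper resolves this differently: it does \emph{not} pre-commit to a hyperedge. It first runs the fiber argument at an arbitrary $\bm{v_0}\in V$ and lets unboundedness and path-connectedness of the fiber select some coordinate $\overline{v}$ along which the fiber takes all sufficiently large values; only then is the antidistinguishing sequence $\bm{v}$ for that $\overline{v}$ built. The extra ingredient beyond the autoregressive proof is the observation that, by transitivity of $C^1$-diffeomorphisms, all fiber spaces $H_{\bm{v}}$ are diffeomorphic to a common $H$, giving the product structure $H\times L\times V\cong U\times V$. The paper then argues that the construction of the three points is unchanged when $\eta_{\bm{v_0}}$ is replaced by $\eta_{\bm{v}}$, so one can take $\bm{v}$ to be the antidistinguishing sequence itself (using the final free coordinate to ensure $\bm{v}\in V$, since $V$ omits only isolated points) while still having three hypergraph-state inputs on one fiber.
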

\begin{proof}
    Consider $R:=K\times\mathbb{R}^{n+2^k+1}$, with points of the form $\left(\bm{Q},\bm{P}\right)$ with $\bm{Q}\in K$ and $\bm{P}\in\mathbb{R}^{n+2^k+1}$. Just as in the proof of Theorem~\ref{thm:online_sep_qubit}, it is obvious from this construction that $K$ is an $\binom{n}{k}$-dimensional embedding of representations of $k$-uniform hypergraphs via their adjacency tensors. Thus, the states described by points in $K$ are exactly $k$-uniform weighted hypergraph states with, depending on the measurement results, overall phases on the stabilizers.

    We will proceed as follows. First, we will show that when $\dim\left(L\right)$ is sufficiently small, $\mathcal{E}$ must map three distinct $\left(\bm{Q},\bm{P}\right)$, $\left(\bm{Q'},\bm{P}\right)$, and $\left(\bm{Q''},\bm{P}\right)$ to the same point in latent space $\lambda\in L$ for some fixed $\bm{P}=\bm{v_0}\in V$. Then, we will use Lemma~\ref{lemma:hypergraph_state_context} (in the qubit setting) or Lemma~\ref{lemma:qumode_hypergraph_state_context} (in the qumode setting) to show that the stabilizers of these states give rise to an antidistinguishing measurement sequence of length $n-k+2$ once the phases on their stabilizers have been fixed to be identical. Finally, we will show that one can find $\left(\bm{Q},\bm{P'}\right)$, $\left(\bm{Q'},\bm{P'}\right)$, and $\left(\bm{Q''},\bm{P'}\right)$ also mapping to $L$, where $\bm{P'}$ is the associated antidistinguishing measurement sequence. This forces an incorrect output on one of these inputs, implying an infinite backward empirical cross entropy on any finite set containing these three measurement sequences.

    We first show that $\mathcal{E}$ induces a fiber bundle. Let $\left.\mathcal{E}\right|_{U\times\left\{\bm{v}\right\}}$ denote $\mathcal{E}$ restricted to $U\times\left\{\bm{v}\right\}\subset R$ for any choice of $\bm{v}\in V$. By Theorem 5.2 of Ref.~\cite{rabier1997ehresmann}, following our assumptions, $\left.\mathcal{E}\right|_{U\times\left\{\bm{v}\right\}}$ is a fiber bundle for all $\bm{v}\in V$ with base space $C^1$-diffeomorphic to $L$ and induces a $C^1$-diffeomorphism $\eta_{\bm{v}}:H_{\bm{v}}\times L\to U\times\left\{\bm{v}\right\}$ such that:
    \begin{equation}
        H_{\bm{v}}\times L\cong U\times\left\{\bm{v}\right\}
    \end{equation}
    for some fiber space $H_{\bm{v}}$. In particular, $\left.\mathcal{E}\right|_{U\times\left\{\bm{v}\right\}}$ is a trivial bundle, and $H_{\bm{v}}$ is thus path-connected and of dimension:
    \begin{equation}
        \dim\left(H_{\bm{v}}\right)=\dim\left(U\right)-\dim\left(L\right)\geq 2.
    \end{equation}
    By the transitivity of $C^1$-diffeomorphisms, all $H_{\bm{v}}$ are $C^1$-diffeomorphic and can be taken to be some $H$ WLOG. We thus have the trivial bundle structure:
    \begin{equation}
        H\times L\times V\cong U\times V,\label{eq:fiber_def_qubit_enc_dec}
    \end{equation}
    with fibers $C^1$-diffeomorphic to $H\times V$ and base space $C^1$-diffeomorphic to $L$. We claim that $\eta_{\bm{v_0}}\left(H\times\left\{\lambda\right\}\times\left\{\bm{v_0}\right\}\right)$ is unbounded for all $\lambda\in L,\bm{v_0}\in V$. To see this, first note that as $\eta_{\bm{v_0}}$ is a homeomorphism, the boundary of $\eta_{\bm{v_0}}\left(H\times\left\{\lambda\right\}\times\left\{\bm{v_0}\right\}\right)$ for all $\lambda\in L$ must be a subset of $\partial U\times\left\{\bm{v_0}\right\}$ for all $\bm{v_0}\in V$. Assume now that it is nonempty, i.e., that there existed a Cauchy sequence with limit point in $\partial H\times\left\{\lambda\right\}\times\left\{\bm{v_0}\right\}$ that mapped to a Cauchy sequence with limit point in $\partial U\times\left\{\bm{v_0}\right\}$ under $\eta_{\bm{v_0}}$. Then, a sequence with limit point in $\partial U\times\left\{\bm{v_0}\right\}$ would have base point $\lambda\in L$, violating our assumption that no sequence with limit point in the boundary of $U\times\left\{\bm{v_0}\right\}$ has limit point in $L$ under $\mathcal{E}$. Thus, $\eta_{\bm{v_0}}\left(H\times\left\{\lambda\right\}\times\left\{\bm{v_0}\right\}\right)$ has no boundary. In particular, as $\eta_{\bm{v_0}}\left(H\times\left\{\lambda\right\}\times\left\{\bm{v_0}\right\}\right)$ is contractible, it is not compact as closed manifolds cannot be contractible. As $\eta_{\bm{v_0}}\left(H\times\left\{\lambda\right\}\times\left\{\bm{v_0}\right\}\right)\subset K\times\left\{\bm{v_0}\right\}$ has no boundary it is therefore unbounded.

    We now show that each fiber contains a sequence beginning with $m$ trivial operations in the $\left(\ell,n,k\right)$-HSMT task. To see this, fix $\lambda\in L,\bm{v_0}\in V$. As $\eta_{\bm{v_0}}\left(H\times\left\{\lambda\right\}\times\left\{\bm{v_0}\right\}\right)$ is unbounded and path-connected, there exist points with arbitrary $2$-norm modulo $2\cpi$ in $\eta_{\bm{v_0}}\left(H\times\left\{\lambda\right\}\times\left\{\bm{v_0}\right\}\right)$; namely, there exists a point $\left(\bm{h_0},\lambda,\bm{v_0}\right)\in H\times L\times\left\{\bm{v_0}\right\}$ such that (mapped under $\eta_{\bm{v_0}}$):
    \begin{equation}
        \left\lVert\bm{\gamma}\right\rVert_2^2\equiv 0\pmod{2\cpi}
    \end{equation}
    and thus all
    \begin{equation}
        \upsilon_i=t\left(\left\lVert\bm{\gamma}\right\rVert_2^2\right)=0.
    \end{equation}
    This corresponds to trivial measurements in the translation task, fixing the measurement outcomes to all be $+1$. Note that $\lambda$ is the image of $\left(\bm{h_0},\bm{v_0}\right)$ under $\mathcal{E}$. Assuming the model performs the translation task correctly, this fixes the measurement outcomes of the first $n$ measurements to all be $+1$ for \emph{all} points mapping to $\lambda$ under $\mathcal{E}$ as the decoder $\mathcal{D}$ only depends on $\lambda$.
    
    We now show the existence of other, nontrivial sequences sharing a fiber with $\left(\bm{h_0},\lambda,\bm{v_0}\right)$. Recall that $H\times\left\{\bm{v_0}\right\}$ is unbounded and path-connected, and thus there exists some choice of $\overline{v}\in\mathbb{R}^{\binom{\left[n\right]}{k}}$ and $\gamma^\ast\in\mathbb{R}$ such that, for any $\gamma\in\left[\gamma^\ast,\infty\right)$,\footnote{Or $\gamma\in\left(-\infty,-\gamma^\ast\right]$, though the logic proceeds identically.} there exists some point $\left(\bm{Q},\bm{v_0}\right)\in K\times\left\{\bm{v_0}\right\}$ with $\gamma_{\overline{v}}=\gamma$ that maps to $\lambda\in L$ under $\mathcal{E}$. Furthermore, as $\dim\left(H\right)\geq 2$, we can assume WLOG that the associated $\bm{\upsilon}\not\equiv 0\pmod{2\cpi}$ by varying the other coordinate of $\eta_{\bm{v_0}}\left(H\times\left\{\lambda\right\}\times\left\{\bm{v_0}\right\}\right)$. hus, there exist $\left(\bm{h'},\bm{v_0}\right),\left(\bm{h''},\bm{v_0}\right)\in U\times\left\{\bm{v_0}\right\}$ mapping to $\lambda$ that describe $k$-uniform weighted hypergraph states with some respective hyperedges $e_{\overline{v}}',e_{\overline{v}}''$ differing by $1$.

    By Lemma~\ref{lemma:hypergraph_state_context} (in the qubit setting) or Lemma~\ref{lemma:qumode_hypergraph_state_context} (in the qumode setting), we have that there exists a measurement sequence (describable by the translation task) of length $n-k+2$ that antidistinguishes these three associated states with certainty; by considering the dimensions of $\bm{b}$, $\bm{\phi}$, and $\bm{\theta}$ as described in Appendix~\ref{sec:task_desc}, this defines the first $n+2^k$ coordinates of $\bm{P}$. As $V$ is $\mathbb{R}^{n+2^k+1}$ up to potentially the removal of isolated points, by varying the final coordinate---corresponding to measurements \emph{after} the first $n-k+2$ measurements---any given antidistinguishing measurement of length $n-k+2$ can be assumed to be in $V$. As the construction given in Eq.~\eqref{eq:fiber_def_qubit_enc_dec} is unchanged when taking $\eta_{\bm{v}}$ rather than $\eta_{\bm{v_0}}$ as described, we can take $\bm{v}$ to be this antidistinguishing measurement sequence while still ensuring the associated three states map to the same point in the latent space of the model. Therefore, the output for one of these $\bm{h_0},\bm{h'},\bm{h''}$ must then be incorrect. Thus, the model must obtain an infinite backward empirical cross entropy on these three sequences when followed by the antidistinguishing measurement sequence.
\end{proof}

\section{CV \texorpdfstring{$k$}{k}-HRNNs as a Quantization of Recurrent Neural Networks}\label{sec:hrnns_as_q_rnns}

We assume the dimension of the output tokens $y_i$ is $m=1$ for notational simplicity in what follows; the generalization to larger $m$ is immediate. The construction of CV $k$-HRNNs described in Sec.~\ref{sec:hrnn} is the quantization of alternating the classical Hamiltonian dynamics under the Hamiltonians:
\begin{align}
    G\left(\bm{x_i}\right)&:=\gamma\left(\bm{x_i}\right)\prod_{w_j\in\bm{\alpha}\left(\bm{x_i}\right)}q_{w_j},\\
    H\left(\bm{x_i}\right)&:=\left(\sum_{j=1}^k\phi_j p_{v_j}+\sum\limits_{\overline{w}\in 2^{\overline{v}}}\theta_{\overline{w}}\prod_{w_j\in\overline{w}}q_{w_j}\right)p_0,
\end{align}
where in the definition of $H$:
\begin{align}
    \bm{\beta}\left(\bm{x_i}\right)&=\overline{v},\\
    \bm{\kappa}\left(\bm{x_i}\right)&=\left(\bm{\phi},\bm{\theta}\right)
\end{align}
and $p_0$ is the momentum of the ancilliary output register used to perform phase estimation in Fig.~\ref{fig:hrnn}.

Consider now discrete time dynamics under this Hamiltonian evolution with fresh ancillary modes labeled by $0$ at each time step. We assume that initially all canonical variables are $0$, other than an initial moment $p_0=1$ for the ancilliary mode added at each time step. After evolution by $G$ followed by $H$, we have from Hamilton's equations of motion the new canonical variables (given by primed notation):
\begin{align}
    p_a'&=p_a-\delta_{a\in\bm{\alpha}\left(\bm{x_i}\right)}\gamma\left(\bm{x_i}\right)\prod_{w_j\neq a\in\bm{\alpha}\left(\bm{x_i}\right)}q_{w_j}-\sum_{\overline{w}\in 2^{\overline{v}}:a\in\overline{w}}\theta_{\overline{w}}\prod_{w_j\neq a\in\overline{w}}q_{w_j},\\
    q_a'&=q_a+\phi_a,\\
    q_0'&=\sum_{a=1}^k\phi_a\left(p_a-\delta_{a\in\bm{\alpha}\left(\bm{x_i}\right)}\gamma\left(\bm{x_i}\right)\prod_{w_j\neq a\in\bm{\alpha}\left(\bm{x_i}\right)}q_{w_j}\right)+\sum\limits_{\overline{w}\in 2^{\overline{v}}}\theta_{\overline{w}}\prod_{w_j\in\overline{w}}q_{w_j},
\end{align}
where $\delta_\cdot$ denotes the indicator function.

In particular, a classical recurrent neural network~\cite{Hopfield2554} obeying these alternating Hamiltonian dynamics updates its latent vector as:
\begin{equation}
    \bm{\lambda'}=\bm{f_{k-1}}\left(\bm{\lambda};\bm{x}\right),
\end{equation}
where $\bm{f_{k-1}}\left(\bm{\lambda};\bm{x}\right)$ is some degree-$\left(k-1\right)$ polynomial in the entries of $\bm{\lambda}$. Similarly, it outputs:
\begin{equation}
    \bm{y}=\bm{g_{k-1}}\left(\bm{\lambda};\bm{x}\right),
\end{equation}
where $\bm{g_{k-1}}$ is some degree-$\left(k-1\right)$ polynomial in the entries of $\bm{\lambda}$.

\section{Time Complexity}\label{sec:deets_on_complexity}

We here discuss the implications of the existence of a polynomial-overhead classical simulation algorithm of the qumode $k$-HRNN as discussed in Sec.~\ref{sec:qumode_hrnn_inf}; we assume $k$ is constant with respect to $n$ in what follows.

As mentioned in the main text, our achieved memory separation is asymptotically optimal by Ref.~\cite{PhysRevLett.97.190501}, which gives an $\operatorname{O}\left(n^k\right)$-memory classical simulation algorithm for the quantum model we consider. This is accomplished via efficient representations of systems with dynamics constrained to belong to a low-dimensional Lie group; this is exactly our setting, where the dimension of the Lie group is $\operatorname{\Theta}\left(n^k\right)$. As this algorithm requires multiplications of $\operatorname{\Theta}\left(n^k\right)$-dimensional representations of members of this algebra, it achieves a time complexity of $\operatorname{O}\left(n^{\upomega k}\right)$ per unit cell of the $k$-HRNN, where $\upomega\geq 2$ is the matrix multiplication exponent (currently known to be at most $2.37187$~\cite{duan2022faster}, but for reasonable $n$ no better than $\log_2\left(7\right)\approx 2.8$ using Strassen's algorithm~\cite{strassen1969gaussian}). For $k=2$ it is further known that $\operatorname{O}\left(n^2\right)$-time classical simulation algorithms exist for a unit cell of a $k$-HRNN~\cite{PhysRevA.70.052328,calcuth2022}.

As discussed in Sec.~\ref{sec:qumode_hrnn_inf}, we expect training times for $k$-HRNNs to be at least $\operatorname{\Omega}\left(n^k\right)$ per training iteration; this is required to even resolve the gradients of the model~\cite{larocca2022diagnosingbarren,rudolph2023trainability,fontana2023adjoint,ragone2023unified}. Though the classical algorithm of Ref.~\cite{PhysRevLett.97.190501} does not quite reach this scaling, this is only a constant-degree polynomial overhead, not an arbitrary-degree polynomial separation.

Interestingly, this gives a nice way to train the quantum model: one can classically train a $k$-HRNN via the classical simulation algorithm of Ref.~\cite{PhysRevLett.97.190501} with only a cubic overhead, but still leverage an arbitrary-polynomial resource advantage by using a quantum computer at time of inference. In this way, one can consider the $k$-HRNN as a ``quantum compression'' of a quantum-inspired classical neural network model of size $N$ trained on a classical computer. Once training is finished, one implements the model on a quantum computer using $\sim N^{\frac{1}{k}}$ qumodes. All future evaluations of the model now use polynomially fewer resources than the classical implementation, with the degree of separation depending on $k$. This also opens the possibility for the quantum model to ``take over'' training when in the vicinity of a minimum, where the loss landscape is believed to be more amenable for training on a quantum device~\cite{anschuetz2021critical,anschuetz2022barren}. We leave further investigation of the implications of hybrid quantum-classical training algorithms for future work.

\section{Trainability in the Qubit-Based Construction}\label{sec:qubit_based_sep}

We here discuss a generalization of the qubit-based $k$-HRNNs considered in Sec.~\ref{sec:hrnn} which allows for efficient training of the model when $k$ is constant. We here assume that the model is initialized with latent state $\ket{\lambda_1}=\ket{+}^{\otimes n}$.\footnote{Other initial latent states can be considered by fixing some number of initial tokens inputted to the $k$-HRNN.} In each unit cell, a gate of the form:
\begin{equation}
\operatorname{C}^{k-1}Z_{\overline{v}}\left(\phi\right)=\exp\left(-\ci\phi\ket{\bm{1}}\bra{\bm{1}}_{\overline{v}}\right)
\end{equation}
is applied, and is followed by phase estimation of an operator of the form:
\begin{equation}
    G_{\overline{v}}\left(\bm{\phi},\bm{\theta}\right):=\exp\left(-\ci\sum_{i=1}^k\phi_i X_{v_i}\right)\exp\left(-\ci\sum_{\overline{w}\in 2^{\overline{v}}}^k\theta_{\overline{w}}\ket{\bm{1}}\bra{\bm{1}}_{\overline{w}}\right).
\end{equation}
Here, we have used the shorthand:
\begin{equation}
    \ket{\bm{1}}\bra{\bm{1}}_{\overline{v}}:=\prod_{v_i\in\overline{v}}\left(\frac{I_{2^n}-Z_{v_i}}{2}\right).
\end{equation}
We now present a modification of this construction which, though it is no longer translationally invariant, will allow for an efficient training algorithm while maintaining our proven classical lower bound for simulating the network.

We consider two predetermined time steps---that is, token indices---$\tau_2\geq\tau_1$. We constrain our model such that:
\begin{enumerate}
    \item For each time step indexed by $t\in\left[\tau_1\right]$, all $\bm{\kappa}\left(\bm{x_t}\right)$ are constrained to be $\bm{0}$.
    \item For each time step indexed by $t\in\left\{\tau_1+1,\ldots,\tau_2\right\}$, all $\gamma\left(\bm{x_t}\right)$ are constrained to be $0$ and all $G_{\overline{v}}\left(\bm{\kappa}\left(\bm{x_t}\right)\right)$ of the form:
    \begin{equation}
        G_j\left(\bm{\kappa}\left(\bm{x_t}\right)\right)=\exp\left(-\ci\cpi\ket{1}\bra{1}_j\right)
    \end{equation}
    for $j$ labeling a single qubit. We assume exactly $n-k$ distinct $j$ are measured for time steps $t\in\left\{\tau_1+1,\ldots,\tau_2\right]$.
    \item For each time step indexed by $t>\tau_2$, all $\gamma\left(\bm{x_t}\right)$ are constrained to be $0$.
\end{enumerate}

Though this construction seems arbitrary, the qubit $k$-HRNN is constructed in such a way such that:
\begin{enumerate}
    \item Time steps through $t\in\left[\tau_1\right]$ prepare a qubit hypergraph state.
    \item Time steps $t\in\left\{\tau_1+1,\ldots,\tau_2\right\}$ measure $n-k$ qubits in the computational basis.
    \item Time steps $t>\tau_2$ measures operators of the form of $G_\cdot$ on the remaining qubits.
\end{enumerate}
As quantum hypergraph states have a classically efficient stabilizer representation (see Appendix~\ref{sec:qubit_hypergraph}), for $k$ constant, all steps of this procedure are efficient to classically simulate. Just as is the case in the qumode setting, then---see Sec.~\ref{sec:qumode_hrnn_inf}---there are no quantum trainability barriers (QTBs) as the model can be simulated classically. Finally, our memory lower bound for classical simulation of the model still holds, as the subset of inputs $\bm{x_i}=\left(\bm{\alpha_i},\bm{\beta_i},\gamma_i,\bm{\kappa_i}\right)$ we consider in the $\left(\ell,n,k\right)$-HSMT task in our proofs of Theorems~\ref{thm:online_sep_qubit} and~\ref{thm:enc_dec_sep_qubit} obey these same constraints.

\end{document}